\definecolor{darkpastelgreen}{rgb}{0.01, 0.75, 0.24}
\newcommand{\F}{\mathcal{F}}
\newcommand{\W}{\mathcal{W}}
\newcommand{\intinf}{\int_{-\infty}^\infty}
\newcommand{\intoinf}{\int_{0}^\infty}
\newcommand{\FTChiSq}{{|\widehat{\chi}(\omega)|}^2}
\newcommand{\FTChiSqlambda}{{|\widehat{\chi}_\lambda(\omega)|}^2}
\newcommand{\ee}{\mathrm{e}}
\newcommand{\ii}{\mathrm{i}}
\renewcommand{\dd}{\mathrm{d}}
\DeclareMathOperator{\sgn}{\text{sgn}}
\DeclareMathOperator{\sinc}{\text{sinc}}
\DeclareMathOperator{\Si}{Si}
\newtheorem{proposition}{Proposition}[section]
\newtheorem{definition}[proposition]{Definition}
\numberwithin{equation}{section}
\title{Waiting around for Unruh}
\author[1]{Leo J. A. Parry\thanks{leo.parry@nottingham.ac.uk}}
\author[2]{Diego Vidal-Cruzprieto\thanks{vidaleando@proton.me}}
\author[2,3]{Christopher J. Fewster\thanks{chris.fewster@york.ac.uk}}
\author[1]{Jorma Louko\thanks{jorma.louko@nottingham.ac.uk}}
\affil[1]{School of Mathematical Sciences,
University of Nottingham, 
\break Nottingham NG7 2RD, UK}
\affil[2]{Department of Mathematics, Ian Wand Building, Deramore Lane,
\break University of York, York YO10 5GH, UK}
\affil[3]{York Centre for Quantum Technologies, University of York, York YO10 5DD, UK}
\date{August 2025; revised November 2025.\\[2ex]
{\small Published in Class. Quantum Grav. \textbf{42}, 245012 (2025).}}
\begin{document}
\maketitle

\begin{abstract}
How long does a uniformly rotating observer need to interact with a quantum field in order to register an approximately thermal response due to the circular motion Unruh effect? 
We address this question for a massless scalar field in $2+1$ dimensions, defining the effective temperature via the ratio of excitation and de-excitation rates of an Unruh-DeWitt detector in the long interaction time limit. In this system, the effective temperature is known to be significantly smaller than the linear motion Unruh effect prediction when the detector's energy gap is small: the effective temperature tends to zero in the small gap limit, linearly in the gap. 
We show that a positive small gap temperature at long interaction times can be regained via a controlled long-time-small-gap double limit, provided the detector's coupling to the field is allowed to change sign. The resulting small gap temperature depends on the parameters of the circular motion but not on the details of the detector's switching. 
The results broaden the energy range for pursuing an experimental verification of the circular motion Unruh effect in analogue spacetime experiments. 
As a mathematical tool, we provide a new implementation of the long interaction time limit that controls in a precise way the asymptotics of both the switching function and its Fourier transform. 

\end{abstract}

\section{Introduction} \label{sec: Introduction}

The Unruh effect \cite{Fulling:1972md,Davies:1974th,Unruh1976,Fulling:2014} is the prediction in quantum field theory that a uniformly linearly accelerated detector with proper acceleration $a$ responds to the Minkowski vacuum of a relativistic quantum field as if the vacuum were a thermal state at the Unruh temperature
\begin{equation} 
\label{eq: Unruh temp}
    T_U = \frac{\hbar a}{2\pi c k_B}.
\end{equation}
The geometric observation behind this phenomenon is that the restriction of the Minkowski vacuum to a quadrant of Minkowski space, adapted to the boost Killing vector that generates the linearly accelerated trajectory, is a genuinely thermal state with respect to the detector's proper time, at the temperature \eqref{eq: Unruh temp} \cite{Unruh1976,Fulling:2014,Bisognano:1975ih,Bisognano:1976za,SEWELL198023,Sewell:1982zz}.  
As a temperature of $1\,\mathrm{K}$ requires an acceleration of approximately $2.4 \times 10^{20}\mathrm{ms^{-2}}$, an observational verification of this phenomenon currently lies well beyond experimental capabilities. Despite this challenge, observation of the Unruh effect would be significant, partly due to its deep connections with other phenomena, such as the Hawking effect~\cite{Hawking:1975vcx}, and the cosmological particle production \cite{Parker:1969au}, which may have played a critical role in the formation of the present-day structure of the Universe \cite{Mukhanov:2007zz}. For a comprehensive review of the Unruh effect and its applications, see \cite{Crispino:2007eb}.

Better prospects for observing the Unruh effect are provided by analogue spacetime systems~\cite{Unruh1981,Liberati,HeliumUniverse}, which simulate relativistic quantum fields in nonrelativistic hydrodynamical or condensed matter table-top experiments. In such systems, excitations in some observable property of a fluid, like the density, velocity potential, or fluid height, play the role of the quantum field, and crucially, the speed of light is replaced by the speed of sound of the fluid, thereby raising the Unruh temperature \eqref{eq: Unruh temp} by several orders of magnitude. Confirmation of stimulated Hawking emission~\cite{Weinfurtner:2010nu}, the classical mode conversion underlying the Unruh effect \cite{Leonhardt:2017lwm}, 
and (phononic) Hawking radiation and entanglement \cite{Steinhauer:2015saa} have been reported in such analogue systems.

Another challenge with observing the Unruh effect, even in analogue systems, is to be able to maintain a linearly accelerating system within the experimental apparatus for a length of time sufficient to resolve the phenomenon. This challenge can be overcome by considering circular \cite{BellLeinaas, Bell:1986ir, Unruh:1998gq, Leinaas:1998tu} rather than linear acceleration, for which an experiment can be conducted for an arbitrarily long time. A~further advantage of circular acceleration is that the lack of a condensed matter relativistic time dilation can be accounted for at the data analysis stage \cite{Retzker:2007vql, Gooding:2020scc, Biermann:2020bjh, Unruh:2022gso, Gooding:2025tfp}.
The circular motion Unruh effect has previously been considered in the context of electron beam depolarisation in accelerator storage rings \cite{BellLeinaas,Bell:1986ir,Costa:1994yx,Guimaraes:1998jf,Leinaas:1998tu, Unruh:1998gq}. For other work on the theory of the circular motion Unruh effect, see \cite{Denardo:1978dj,Letaw:1979wy, LetawStationary, Takagi:1986kn, Davies:1996ks, Korsbakken:2004bv, Chowdhury:2019set, Lochan:2019osm, Good_2020, Bunney:2023vyj, Bunney:2023ipk, Bunney:2024qrh}.
 
For uniform linear acceleration, the Unruh effect can be understood in terms of a Bogoliubov transformation that expresses the Minkowski vacuum in terms of the Rindler vacuum (the vacuum adapted to the boost Killing vector), together with the corresponding Rindler excitations~\cite{Unruh1976}. 
The spectrum of these excitations is thermal at the Unruh temperature~\eqref{eq: Unruh temp}, where $a$ is the proper acceleration of the trajectory. 
For uniform circular motion, by contrast, a Bogoliubov transformation that would express the Minkowski vacuum in terms of a `co-rotating' vacuum and `co-rotating' excitations thereon is not available \cite{Unruh1976, Denardo:1978dj, Letaw:1979wy, Davies:1996ks, Earman:2011zz}. 
Instead, an \textit{effective\/} temperature parameter can be defined operationally in terms of the ratio of excitations and de-excitations of a ``particle detector", which is typically modelled by a localised quantum system with discrete energy levels or by a localised mode of a quantum field \cite{Unruh1976,DeWitt1979}. Although the effective temperature in circular motion depends not just on the proper acceleration but on all the parameters of the orbit and on the internal energy spacing of the detector, the effective temperature largely agrees with the Unruh temperature \eqref{eq: Unruh temp} over most of the parameter space \cite{Biermann:2020bjh, Unruh:1998gq, Good_2020}. It is in this sense that the response of such a detector is \textit{approximately\/} thermal and the effective temperature therefore provides a useful quantifier of the circular motion Unruh effect.

In the recent analogue spacetime proposals to observe the circular motion Unruh effect in 
Bose-Einstein condensates~\cite{Gooding:2020scc,Gooding:2025tfp} and 
superfluid Helium thin-films~\cite{Bunney:2023ude}, 
the quantum field that is simulated is a $(2+1)$-dimensional massless scalar field. In this system, the circular acceleration effective Unruh temperature is of the same order of magnitude as the linear acceleration prediction over most of the parameter space, but it is much smaller when the internal energy spacing of the accelerating detector is small and the interaction with the quantum field lasts for a long time~\cite{Biermann:2020bjh}. 
This phenomenon is characteristic of circular motion in $2+1$ dimensions for a massless field, and its mathematical origin is in the weak fall-off of the field's Wightman function along the circular worldline~\cite{Parry:2024jrm}. The phenomenon hence presents a challenge for experiments that would operate in the small energy spacing regime. 
We shall show that the challenge can be alleviated by allowing the coupling between the ambient field and the probe to change sign in a suitably controlled way.

Mathematically, the purpose of this paper is to investigate to what extent the assumption of an infinitely long interaction duration is responsible for the disparity between the circular motion effective temperature and the Unruh temperature in the small gap regime. In brief: \textit{How long does an observer undergoing circular motion in\/ $2+1$ dimensions need to wait to register approximate thermality when the observer's internal energy spacing is small?} Physically, the answer has clear resource implications for any experiment.

A similar question was investigated in \cite{Fewster:2016ewy} for a uniformly linearly accelerated two-level detector with a large energy gap. 
In that setting, the waiting time for approximate thermality was found to grow with the energy gap, at a rate that depends on how the large interaction time is modelled. 

The model we employ here is as follows. We consider a massless scalar field in $2+1$ dimensional Minkowski spacetime, prepared initially in the Minkowski vacuum. We probe the field with an Unruh-DeWitt (UDW) detector, a pointlike two-level quantum system coupled linearly to the quantum field \cite{Unruh1976,DeWitt1979}, and we take the interaction to be sufficiently weak that first order perturbation theory in the coupling strength is valid. In $3+1$ dimensions, this model captures the essence of the interaction between an atom and the electromagnetic field when angular momentum interchange is negligible \cite{Martin-Martinez:2012ysv,Alhambra:2013uja}. 
To control the interaction duration, we assume that the  
interaction strength is time dependent and proportional to a  switching function that depends on the detector's proper time, and the switching function either has compact support or is appropriately suppressed in the distant past and future. The profile of the switching then determines the interaction duration, or the effective interaction duration when the support is not compact. 
The long interaction limit is implemented by stretching the profile in a controlled way. 

The switching profile could be stretched in various ways, two of which were considered in~\cite{Fewster:2016ewy}: 
adiabatic scaling, in which the entire switching function is scaled by a constant in its argument, and plateau scaling, in which only the duration of an intermediate plateau interval is scaled, leaving the finite duration switch-on and switch-off intervals unscaled. However, there are other possibilities, and to provide a broad framework
we introduce a new implementation of the long interaction time limit, which we call an Asymptotically Scaled Switching Family (ASSF). This is a family of functions $\chi_\lambda$ ($\lambda>0$) obeying conditions set out in Definition~\ref{def:assf}, where the long time limit corresponds to $\lambda\to\infty$. The definition
encompasses as special cases both the adiabatic and plateau scalings of \cite{Fewster:2016ewy} but is significantly more general, controlling in a precise way the asymptotics of both the switching function and its Fourier transform. Some of our results are developed for ASSFs in general, while others are given for adiabatic and plateau scalings, making use of ASSF properties to simplify the technical proofs by placing them in the appropriately general and rigorous setting. However, we anticipate that many of the latter results are generalisable to broader classes of ASSFs.

We first consider ASSFs in which the functions $\chi_\lambda$ do not change their sign and are bounded uniformly in~$\lambda$. It is shown that the effective temperature vanishes in a successive limit of $\lambda\to\infty$ and $E\to 0$, taken in either order. We then specialise to adiabatically scaled and plateau scaled switching functions, maintaining the assumption that the switching function has a uniform sign, but considering a simultaneous limit of $\lambda\to \infty$ and $E\to 0$, assuming that these parameters are related in some way -- we choose an inverse power law. We show that 
the effective temperature still tends to zero in the small gap limit, although the rate at which the temperature approaches zero can be slowed to an arbitrarily small power of the gap by taking the interaction duration to increase relatively slowly when the detector gap decreases. 

We then consider switching functions that do change their sign. For adiabatic scaling, in the simultaneous long-time and small-gap limit described above, we show that a necessary condition for the effective temperature to have a positive small gap limit is that the time-average of the switching function vanishes. We present explicit examples of adiabatically scaled zero-time-average switching functions for which a positive small gap temperature is indeed attained. 
We further show that any adiabatically scaled, non-compactly supported switching function with this small-gap limit property
can be modified, subject to mild falloff conditions, into a compactly supported switching function that has the same limit property under a scaling that is asymptotically adiabatic. Finally, we show that similar constructions are not available with plateau scaling. 

Crucially, under the switching constructions described above, the recovered small-gap effective temperature depends only on the parameters of the circular motion and is independent of the details of the switching. These switching constructions therefore reveal properties of the worldline and the quantum field theory, rather than properties of a deftly engineered switching. 

Switching functions with sign changes appear not to have received attention in the literature. However, there is nothing that prevents us \textit{a priori\/} from considering such switching functions, or from constructing them in laboratory experiments. 
In fact, they arise naturally in entanglement harvesting protocols in which causally disconnected local laser pulses become entangled through their interaction with, for example, an electromagnetic field \cite{Lindel:2023rfi} or a Bose-Einstein condensate~\cite{Gooding:2023xxl}. In this setting, the switching function is given by the coherent amplitude of the electric field of the laser, which can be given any profile, and in particular, it can be engineered to change sign. In the Bose-Einstein condensate experiment proposed in~\cite{Gooding:2020scc, Gooding:2025tfp}, the engineering would be related to the detuning of the probe laser beam sidebands around a resonance frequency of the condensate.

The paper is structured as follows. In Section \ref{sec: Preliminaries} we introduce the preliminary theory: the UDW detector model and its response in stationary motion, approximate thermality and the effective temperature, 
and the long time limit. 
Section \ref{sec:assf} introduces ASSFs as a new implementation of the long interaction time limit, proving properties that will be used in the later sections. 
Section \ref{sec:circmotion}
specialises to $2+1$ circular motion and considers the long-time and small-gap limits when the limits are taken in succession in either order. 
Section \ref{sec: adiab / plat scalings} is an interlude that establishes the notation for the adiabatic and plateau scalings. 
In Section~\ref{sec: Small gap long time}, we compute the simultaneous long time with small gap asymptotics with adiabatic and plateau scaled switching functions, under the assumption of no sign changes. 
Switchings that change sign are considered in Section~\ref{sec: negative switchings}. 
Section \ref{sec:conclusions} presents a summary and conclusions. 
Proofs of technical results are delegated to five appendices. 

We use units in which $\hbar=k_B = c =1$. 
We work in $(2+1)$-dimensional Minkowski spacetime 
with standard Minkowski coordinates 
$(t,x,y) = (t, \mathbf{x})$, in which the Minkowski metric $\eta$ reads 
$\dd s^2 = - \dd t^2 + \dd x^2+\dd y^2$. 
Spacetime points are denoted by sans serif letters. 
In asymptotic formulae, 
$f(x)=O(x)$ denotes that $f(x)/x$ is bounded in the limit of interest, and $f(x) = o(x)$ denotes that $f(x)/x\to0$ in the limit of interest. The Heaviside theta function $\Theta(x)$ is defined as
\begin{equation}
    \Theta(x) = 
    \begin{cases}
        1 & \text{ for } x\geq0 \\
        0 & \text{ for } x<0,
    \end{cases}
\end{equation}
and the signum function $\sgn(x)$ is defined as
\begin{align}
    \sgn(x) = 
    \begin{cases}
        1 & \text{ for } x>0 \\
         0 & \text{ for } x=0\\
        -1 & \text{ for } x<0.
    \end{cases}
\end{align}
The Fourier transform is defined as
\begin{equation}
    \widehat{f}(\omega) = \intinf \dd t \, \ee^{-\ii\omega t} f(t).
    \label{eq:fourier-def}
\end{equation}
The $L^2(\mathbb{R})$ norm $\Vert \cdot \Vert$ is denoted by
\begin{equation}
    {\Vert f \Vert}^2 = \intinf \dd t  \, {|f(t)|}^2.
\end{equation}
The $k$th moment of a real-valued function~$g$, with $k=0,1,2,\ldots$, is denoted by  
\begin{equation} \label{eq: moment def}
    M_k[g] = \int_{-\infty}^\infty \dd t \, t^k g(t).
\end{equation}
Finally, we recall that $\sinc$ denotes the analytic function defined by 
\begin{equation}
    \sinc z=\begin{cases} \frac{\sin z}{z} & z\neq 0\\ 1 & z=0.\end{cases}
\end{equation}

\section{Preliminaries} \label{sec: Preliminaries}

In this section we set out preliminaries about the field-detector model in $d$ spacetime dimensions. 
We introduce the effective temperature for a detector in stationary motion, defined via the detailed balance formula, and we review how the long interaction time limit is implemented in the detector's response and in the effective temperature. 

\subsection{Field-detector model} \label{sec: field detector model}

We consider an Unruh-DeWitt (UDW) detector~\cite{Unruh1976, DeWitt1979}, a pointlike two-level quantum system, on a worldline $\mathsf{x}(\tau)$ parametrised by proper time $\tau$ in $d$-dimensional Minkowski spacetime with $d\ge3$. The detector's Hilbert space $\mathcal{H}_D \cong \mathbb{C}^2$ is spanned by the orthonormal basis $\{\ket{0},\ket{E}\}$ whose elements satisfy $H_D\ket{0}=0$ and $H_D\ket{E} = E\ket{E}$, where $H_D$ and $E$ are the detector's Hamiltonian and energy gap, respectively. If $E>0$, $\ket{0}$ and $\ket{E}$ are the ground and excited states, respectively, whereas if $E<0$, these roles are reversed. We assume throughout that $E\neq 0$, since from Section \ref{sec:circmotion} onward we specialise to $2+1$ dimensions, where the detector's transition rate is discontinuous at $E=0$.

We let the detector interact with a real massless scalar field $\phi$ whose free Hilbert space $\mathcal{H}_\phi$ is the Fock space induced by the Minkowski vacuum. The total Hilbert space is $\mathcal{H}=\mathcal{H}_D \otimes \mathcal{H}_\phi$. 
Working in the interaction picture, we take the interaction Hamiltonian to be
    \begin{equation}
        H_{\text{int}}(\tau) = c \chi(\tau)  \mu(\tau) \otimes \phi(\mathsf{x}(\tau)),
    \end{equation}
where $c$ is a real-valued coupling constant, $\chi(\tau)$ is a real-valued switching function that specifies how the interaction is turned on and off, $\mu(\tau)$ is the detector's monopole moment operator, and $\phi(\mathsf{x}(\tau))$ is the field evaluated on the detector's worldline. The monopole moment operator is given by
\begin{equation}
    \mu(\tau) = \ee^{-\ii E\tau}\sigma^- + \ee^{\ii E\tau}\sigma^+,
\end{equation}
where $\sigma^-\ket{E} = \ket{0}$ and $\sigma^+\ket{0} = \ket{E}$. For $E>0$, $\sigma^+$ is the raising operator and $\sigma^-$ is the lowering operator; for $E<0$, these roles are reversed. 

In the present section, Section~\ref{sec: Preliminaries}, we assume  for concreteness that $\chi$ is a smooth function of compact support, 
$\chi \in C^\infty_0(\mathbb{R})$, and not identically vanishing. 
It follows that~$\widehat\chi$, 
where the hat denotes the Fourier transform in the convention~\eqref{eq:fourier-def}, 
is smooth (and in fact real analytic) and falls off faster than any inverse power. 
We shall relax these assumption from Section \ref{sec:assf} onwards. Earlier literature on detectors with compactly supported switching functions includes \cite{Svaiter1992,Higuchi1993,Sriramkumar1994,Fewster:2016ewy}. 

Before the interaction is turned on, the detector is 
prepared in the state $\ket{0}$ and the field is prepared in a state $\ket{\Psi}$ whose Wightman function $\expval{\phi(\mathsf{x})\phi(\mathsf{x'})}{\Psi}$ is a distribution of Hadamard type in the coincidence limit $\mathsf{x'}\to\mathsf{x}$ \cite{Decanini:2005eg}. 
Working within first-order perturbation theory in the coupling constant~$c$, the probability of finding the detector in the state~$\ket{E}$ after the interaction has ceased, 
without observing the final state of the field, 
is proportional to the response function~$\F_\chi(E)$,
\begin{equation} \label{eq: RF non-stationary}
    \F_\chi(E) = \intinf \dd\tau \intinf \dd\tau' \, \chi(\tau)\chi(\tau')\, \ee^{-\ii E(\tau-\tau')} \mathcal{W}(\tau,\tau'),
\end{equation}
where the distribution $\mathcal{W}(\tau,\tau'):= \expval{\phi(\mathsf{x}(\tau))\phi(\mathsf{x}(\tau'))}{\Psi}$ is the 
pull-back of the Wightman function in the state $\ket{\Psi}$ 
to the detector's worldline 
\cite{Unruh1976,DeWitt1979,birrell,Wald:1995yp,Junker:2001gx}. 
In the rest of the paper we suppress the constant of proportionality and we use the terms ``response function'' and ``probability'' interchangeably. 

When $\ket{\Psi}$ is the $d$-dimensional Minkowski vacuum, 
$\W(\tau,\tau')$ is given by
\begin{equation}
    \W(\tau, \tau') = \frac{\Gamma(d/2-1)}{4\pi^{d/2}\left[ (\mathbf{x}-\mathbf{x'})^2-(t-t'-\ii\epsilon)^2\right]^{(d-2)/2}},
\end{equation}
where $\mathbf{x}= \mathbf{x}(\tau)$, $\mathbf{x'} = \mathbf{x}(\tau')$, $t=t(\tau)$ and $t'=t(\tau')$, and the distributional limit $\epsilon \to 0^+$ is understood. For odd~$d$, the denominator has the phases $\ii^{d-2}$ or $(-\ii)^{d-2}$ when $t-t'>0$ or $t-t'<0$, respectively, because $\mathsf{x}(\tau)$ is a timelike curve.

We note that this model, though standard, has some defects if one tries to understand it beyond the context of first order perturbation theory in $3+1$ dimensions (see, e.g., the brief discussion in section 5 of~\cite{EncycMP_Measurement_in_QFT_FewsterVerch2025}) arising from the singular nature of the coupling. The problem can be ameliorated by extending the smearing around the curve, as is done in the rigorous treatment~\cite{DeBievre:2006pys}, 
or by imposing cut-offs~\cite{HuLin:2006}. 
A~recent treatment that renormalises the interaction can be found in~\cite{Sanchez:2025eix}. 
We will restrict to first order perturbation theory in this paper.

\subsection{Stationary response and the long time limit} \label{sec: stat response}

We now specialise to a detector whose response is stationary, in the sense that $\mathcal{W}(\tau,\tau')$, the pull-back of the Wightman function to the detector's trajectory, depends on $\tau$ and $\tau'$ only through their difference $\tau-\tau'$. We may then write 
$\mathcal{W}(\tau,\tau') = \mathcal{W}(\tau-\tau',0)=:\mathcal{W}(\tau-\tau')$, so that all the dependence of the response on the trajectory and on the field's initial state $\ket{\Psi}$ is encoded in the single-variable distribution~$\mathcal{W}$. 
The response function \eqref{eq: RF non-stationary} can then be recast as \cite{Fewster:2016ewy,Bunney:2023ipk}
\begin{equation} \label{eq: stationary RF}
    \F_\chi(E) = \frac{1}{2\pi} \intinf \dd\omega\, \FTChiSq \, \widehat{\W}(\omega + E), 
\end{equation}
where we recall that the hat denotes the Fourier transform in the convention~\eqref{eq:fourier-def}. 
This situation is realised when the trajectory is stationary in the sense of Minkowski geometry, that is, when it is the orbit of a Killing vector that is timelike in at least some neighbourhood of the trajectory~\cite{LetawStationary,Russo:2009yd,Bunney:2023mkh}, 
and the state $\ket{\Psi}$ is invariant under the Poincar\'e transformations generated by the Killing vector. In particular, when $\ket{\Psi}$ is the Minkowski vacuum, the response is stationary in any geometrically stationary motion. 

While $\widehat{\W}$ is in general a distribution, 
with falloff properties that depend on the spacetime dimension, 
we assume from now on that it is a bounded function, and continuous except possibly at zero argument. This will be the case for circular motion in $2+1$ dimensions, which we consider from Section \ref{sec:circmotion} onward. 

Consider now the limit in which the interaction operates for a long time with approximately constant coupling strength. As $\chi$ has by assumption compact support, we may formalise the long time limit by considering a family of switching functions~$\chi_\lambda$, where $\lambda$ is a positive parameter, such that the support of $\chi_\lambda$ is an interval with length proportional to $\lambda$ as $\lambda\to\infty$, 
and $\chi_\lambda$ is approximately constant within this interval as $\lambda\to\infty$, at least in some averaged sense. 
Specific implementations of this limit were introduced in~\cite{Fewster:2016ewy}, 
under the assumption that $\chi_\lambda$ is non-negative, and we will employ these implementations in  Sections \ref{sec:circmotion}
and~\ref{sec: Small gap long time}. 
These implementations satisfy 
\begin{align}
\lambda^{-1} 
{\Vert \chi_\lambda\Vert}^2
\xrightarrow[\lambda\to\infty]{} 
\frac{c_1}{2\pi}\,,
\ \ c_1>0\ \text{constant}, 
\label{eq:key-lambda-L2norm-limit}
\end{align}
and 
\begin{align}
\lambda^{-1}\F_{\chi_\lambda}(E) \xrightarrow[\lambda\to\infty]{} 
c_1 \widehat{\W}(E)\,,
\ \ c_1>0\ \text{constant}, 
\label{eq:key-Flambda-limit}
\end{align}
for each fixed $E\ne0$, using the continuity of $\widehat{\W}$ at nonzero argument. 
This means that the $\lambda\to\infty$ limit 
implements 
the limit of long interaction time, with the parameter $\lambda$ being proportional to the duration of the interaction,
and with $\widehat{\W}(E)$ being proportional to the detector's transition probability per unit time, known as the transition rate. 
From the experimental point of view, each value of $\lambda$ in this implementation should be understood as the duration of an individual experimental run, and the limit $\lambda\to\infty$ can be approached by longer and longer individual runs. 

We shall show in Section 
\ref{sec:assf}, in Proposition~\ref{prop:chilambda}, 
that \eqref{eq:key-Flambda-limit} holds way beyond the specific implementations of~\cite{Fewster:2016ewy}: 
it holds under rather general conditions on the $\lambda\to\infty$ 
asymptotics of $\chi_\lambda$ and its Fourier transform. In particular, $\chi_\lambda$ does not need to have a uniform sign. This generality will provide a key tool for addressing the case of circular motion in $2+1$ dimensions in the later sections.

\subsection{Effective temperature} \label{sec: effective temp}

We now describe a notion of an effective temperature that characterises the detector's response in long time limit implementations in which \eqref{eq:key-Flambda-limit} holds.

For background, recall that when the detector-field interaction is modelled by a Markovian environment, the detector's state in the weak-coupling-with-long-duration limit approaches the density matrix \cite{DeBievre:2006pys,Juarez-Aubry:2019gjw}
\begin{equation} \label{eq: asymptotic state}
    \rho(E) = \frac{1}{1+\ee^{-E/T(E)}}\begin{pmatrix} 1 & 0 \\
    0 & \ee^{-E/T(E)}\end{pmatrix},
\end{equation}
where $T(E)$ satisfies 
\begin{equation} \label{eq: W hat detailed balance}
    \widehat{\W}(E) = \ee^{-E/T(E)}\widehat{\W}(-E), 
\end{equation}
and is explicitly given by 
\begin{equation} \label{eq: infinite time T}
    \frac{1}{T(E)} = \frac{1}{E}\log\!\left( \frac{\widehat{\W}(-E)}{\widehat{\W}(E)} \right).
\end{equation}
For uniform linear acceleration with proper acceleration~$a$, with the field initially prepared in the Minkowski vacuum, $T(E)$ is independent of the energy $E$ and equal to the Unruh temperature $T_U=a/(2\pi)$ \cite{Unruh1976,DeWitt1979}. 
The state $\rho(E)$ \eqref{eq: asymptotic state} in this case is thus a genuine thermal Gibbs state in the temperature~$T_U$, and 
\eqref{eq: W hat detailed balance} is the detailed balance relation between excitation and de-excitation rates in thermal equilibrium~\cite{Einstein,terhaar-book}. 
For other types of non-inertial stationary motion, $\rho(E)$ \eqref{eq: asymptotic state} is not a thermal state in the Gibbs sense since $T(E)$ depends on~$E$; however, for sufficiently narrow intervals in~$E$, we may view $\rho(E)$ as an approximate Gibbs state in the energy-dependent temperature $T(E)$. We refer to $T(E)$ as the (infinite time) detailed balance temperature. Studies of $T(E)$ in a range of situations are given in 
\cite{Good_2020,Biermann:2020bjh,Juarez-Aubry:2019gjw,Bunney:2023ipk,Bunney:2023vyj,Parry:2024jrm,Bunney:2024qrh}.

Now, in our present case of a finite duration interaction, we consider a family $\chi_\lambda$ of switching functions 
for which \eqref{eq:key-Flambda-limit} holds. 
Following~\cite{Fewster:2016ewy}, 
we define the $\lambda$-dependent detailed balance temperature $T_\lambda(E)$ by 
\begin{equation} \label{eq: finite time T}
    \frac{1}{T_\lambda(E)} = \frac{1}{E}\log\!\left( \frac{\F_\lambda(-E)}{\F_\lambda(E)} \right), 
\end{equation}
where 
\begin{align}
\F_\lambda(E) = \F_{\chi_\lambda}(E)/\lambda . 
\label{eq:Flambda}
\end{align}
For fixed~$\lambda$, $T_\lambda(E)$ does not have an interpretation in terms of the Markovian long time limit considered in~\eqref{eq: asymptotic state}, but in the long interaction duration limit in the sense of $\lambda\to\infty$, 
$\F_\lambda(E)$ becomes proportional to $\widehat{\W}(E)$ and $T_\lambda(E)$ approaches $T(E)$ \eqref{eq: infinite time T}. 
We therefore refer to $T_\lambda(E)$ as the (finite time) detailed balance temperature, and we use it to characterise the response of the detector with the finite time switching function $\chi_\lambda$ and the limit of this response when $\lambda\to\infty$.

\section{Asymptotically scaled switching families}
\label{sec:assf}

In this section we provide a new implementation of the long interaction time limit, such that the long time limit \eqref{eq:key-Flambda-limit} holds, controlling in a mathematically precise way the asymptotics of both the switching function and its Fourier transform. This contains as special cases the adiabatic and plateau scalings of \cite{Fewster:2016ewy} but is significantly more general, and in particular allows the switching functions not to have uniform sign.

Recall that in Section \ref{sec: Preliminaries} we assumed $\chi_\lambda$ to have compact support. Recall also that one of the long time implementations in~\cite{Fewster:2016ewy}, called adiabatic scaling, was to set $\chi_\lambda(\tau) = \chi(\tau/\lambda)$, where $\chi$ is a fixed non-negative switching function. Our generalisation is to require $\chi_\lambda$ to asymptote to $\xi(\cdot/\lambda)$ in a suitable sense as $\lambda\to\infty$, where $\xi$ is a function with suitable properties. 

Our construction of $\xi$ relies on the following Proposition, which will be proved 
at the end of this section.

\begin{proposition}\label{prop:chilambda}
    Let $\chi_\lambda\in C^1(\mathbb{R})$ $(\lambda>0)$ be a family of absolutely integrable switching functions
    with the property that $\lambda^{-1}\widehat{\chi}_\lambda(u/\lambda)$ converges almost everywhere in $u$ as $\lambda\to\infty$. Suppose that
    there exists $\eta\in L^2(\mathbb{R},\dd u/(2\pi))$ with the property that, for all sufficiently large $\lambda>0$, one has $|\widehat{\chi}_\lambda(\omega)|\le \lambda \eta(\lambda\omega)$ for almost all $\omega\in\mathbb{R}$. 
    Then there is $\xi\in L^2(\mathbb{R},\dd t)$ so that 
    \begin{equation}\label{eq:xi}
        \widehat{\xi}(u) = \lim_{\lambda\to\infty} \frac{\widehat{\chi}_\lambda(u/\lambda)}{\lambda}
    \end{equation}
    for almost all $u\in\mathbb{R}$.
    Furthermore, if $V:\mathbb{R}^2\to \mathbb{C}$ is any bounded measurable function that is continuous at 
    $(E_*,0)\in \mathbb{R}^2$, 
    then one has a double limit
\begin{equation}\label{eq:Vlimit}
    \frac{1}{2\pi} \intinf \dd\omega\, \frac{|\widehat{\chi}_\lambda(\omega)|^2}{\lambda} \, V(E,\omega) \xrightarrow[E\to E_*]{\lambda\to\infty} \|\xi\|^2 V(E_*,0).
\end{equation}
 In particular, the scaled response function $\F_\lambda$ \eqref{eq:Flambda} obeys
    \begin{equation}
    \label{eq:F-to-xi2What}
       \F_\lambda(E) \xrightarrow[E\to E_*]{\lambda\to\infty}  {\Vert \xi \Vert}^2 \, \widehat{\W}(E_*), 
    \end{equation} 
    for every $E_*\in\mathbb{R}$ at which $\widehat{\W}$ is continuous. This implies that one also has
    \begin{equation}
    \label{eq:F-to-xi2What-fixedE}
    \F_\lambda(E_*)
    \xrightarrow[\lambda\to\infty]{}
    {\Vert \xi \Vert}^2 \, \widehat{\W}(E_*)
    \end{equation}
    for all such~$E_*$.
\end{proposition}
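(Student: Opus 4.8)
The plan is to move everything onto the rescaled Fourier side, writing $g_\lambda(u):=\lambda^{-1}\widehat{\chi}_\lambda(u/\lambda)$, under which the two hypotheses become transparent. The convergence hypothesis states that $g_\lambda\to g$ pointwise almost everywhere as $\lambda\to\infty$ for some measurable $g$, and the domination hypothesis, after the substitution $\omega=u/\lambda$, states that $|g_\lambda(u)|\le\eta(u)$ for almost every $u$ and all sufficiently large $\lambda$, with the single, $\lambda$-independent function $\eta\in L^2(\mathbb{R},\dd u/(2\pi))$. Letting $\lambda\to\infty$ in this bound gives $|g(u)|\le\eta(u)$ a.e., so $g\in L^2(\mathbb{R},\dd u/(2\pi))$; since in the convention \eqref{eq:fourier-def} the map $f\mapsto\widehat{f}$ is an isometry of $L^2(\mathbb{R},\dd t)$ onto $L^2(\mathbb{R},\dd\omega/(2\pi))$, there is a unique $\xi\in L^2(\mathbb{R},\dd t)$ with $\widehat{\xi}=g$, which is \eqref{eq:xi}, and for which ${\Vert\xi\Vert}^2=\frac{1}{2\pi}\intinf\dd u\,|\widehat{\xi}(u)|^2$.

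For the double limit \eqref{eq:Vlimit}, I would first change variables $\omega=u/\lambda$ in its left-hand side, rewriting it as $\frac{1}{2\pi}\intinf\dd u\,|g_\lambda(u)|^2\,V(E,u/\lambda)$. Since $\lambda$ may be replaced by $1/\lambda$, the limit is over a metrizable parameter range, so it suffices to show that for every sequence $(\lambda_n,E_n)$ with $\lambda_n\to\infty$ and $E_n\to E_*$ this integral tends to ${\Vert\xi\Vert}^2 V(E_*,0)$. Along any such sequence the integrand converges pointwise almost everywhere to $|\widehat{\xi}(u)|^2 V(E_*,0)$: the convergence hypothesis gives $|g_{\lambda_n}(u)|^2\to|\widehat{\xi}(u)|^2$ a.e., while $(E_n,u/\lambda_n)\to(E_*,0)$ for every fixed $u$ together with the continuity of $V$ there give $V(E_n,u/\lambda_n)\to V(E_*,0)$; moreover the integrand is bounded by the fixed integrable function $u\mapsto{\Vert V\Vert}_{L^\infty}\,\eta(u)^2$. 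Dominated convergence then yields the claim, using $\frac{1}{2\pi}\intinf\dd u\,|\widehat{\xi}(u)|^2={\Vert\xi\Vert}^2$.

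The remaining assertions are corollaries. For \eqref{eq:F-to-xi2What} I would apply \eqref{eq:Vlimit} with $V(E,\omega):=\widehat{\W}(\omega+E)$: by the standing assumption $\widehat{\W}$ is bounded, so $V$ is bounded and measurable, and $V$ is continuous at $(E_*,0)$ precisely when $\widehat{\W}$ is continuous at $E_*$; combining this with the identity $\F_\lambda(E)=\frac{1}{2\pi}\intinf\dd\omega\,\lambda^{-1}|\widehat{\chi}_\lambda(\omega)|^2\,\widehat{\W}(\omega+E)$, which follows from \eqref{eq: stationary RF} and \eqref{eq:Flambda}, gives exactly \eqref{eq:F-to-xi2What}. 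Then \eqref{eq:F-to-xi2What-fixedE} is obtained by specialising the double limit to the constant choice $E=E_*$.

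The one place where the hypotheses genuinely earn their keep is in supplying the $\lambda$-uniform $L^1$ majorant ${\Vert V\Vert}_{L^\infty}\,\eta^2$ for the rescaled integrand, so that dominated convergence survives the rescaling $\omega\mapsto u/\lambda$; without the domination hypothesis the mass of $|\widehat{\chi}_\lambda|^2$ could escape to high frequencies as $\lambda\to\infty$ and the limit would fail. The only subsidiary point needing care is the reduction of the genuine double limit to the sequential statement, which is legitimate because a neighbourhood of the relevant limit point is metrizable. I would also note in passing that the same change of variables with $V\equiv1$ (equivalently, $L^2$-convergence $g_\lambda\to g$) gives $\lambda^{-1}{\Vert\chi_\lambda\Vert}^2\to{\Vert\xi\Vert}^2$, i.e.\ the normalisation \eqref{eq:key-lambda-L2norm-limit}.
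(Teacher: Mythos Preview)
Your proof is correct and follows essentially the same route as the paper: define the rescaled function $g_\lambda(u)=\lambda^{-1}\widehat{\chi}_\lambda(u/\lambda)$, use the domination by $\eta$ and dominated convergence to identify $\widehat{\xi}$ and to pass to the limit under the integral after the change of variables $\omega=u/\lambda$, then read off the corollaries with $V(E,\omega)=\widehat{\W}(E+\omega)$. Your explicit sequential reduction of the double limit and the remark on ${\Vert\chi_\lambda\Vert}^2/\lambda\to{\Vert\xi\Vert}^2$ are minor elaborations, not a different approach.
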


Note that the function 
$\xi$ provided by Proposition \ref{prop:chilambda} may be the zero element of $L^2(\mathbb{R},\dd t)$, 
and in this case the right-hand sides of \eqref{eq:Vlimit}, 
\eqref{eq:F-to-xi2What}
and 
\eqref{eq:F-to-xi2What-fixedE}
all vanish. 
The assumptions of Proposition \ref{prop:chilambda} even allow the special case where every $\chi_\lambda$ is identically vanishing. 
The case of interest for us is when $\xi$ is not the zero element of $L^2(\mathbb{R},\dd t)$. 
We hence introduce the following Definition. 

\begin{definition}\label{def:assf}
(Asymptotically Scaled Switching Family.) 
If a switching function family $\chi_\lambda$ satisfies the assumptions of Proposition \ref{prop:chilambda} and the function 
$\xi \in L^2(\mathbb{R},\dd t)$ provided by Proposition \ref{prop:chilambda} is not the zero element of $L^2(\mathbb{R},\dd t)$, we call $\chi_\lambda$ an Asymptotically Scaled Switching Family (ASSF). 
\end{definition}

An example of an ASSF is $\chi_\lambda(\tau)=\chi(\tau/\lambda)$ where $\chi\in C_0^1(\mathbb{R})$ and not identically vanishing: this is as in the adiabatic scaling of~\cite{Fewster:2016ewy}, but without requiring $\chi$ to have a uniform sign, and relaxing $\chi$ to be $C^1$ rather than smooth. 
The conditions of Proposition \ref{prop:chilambda} hold with $\eta=|\widehat{\chi}|$, and the Proposition provides $\xi=\chi$, with ${\Vert \xi \Vert}^2>0$. In this example $\xi$ is continuous. A second example is the plateau scaling of~\cite{Fewster:2016ewy}, 
which we discuss in Section~\ref{sec: adiab / plat scalings}, showing that the conditions of Proposition \ref{prop:chilambda} again hold and ${\Vert \xi \Vert}^2>0$, 
but in this example $\xi$ is not continuous. 

We remark that in the adiabatic scaling example 
the continuity of $\xi$ implies the pointwise limit 
\begin{equation}
\label{eq:chilambdalimit}
\chi_\lambda(\tau)\xrightarrow[\lambda\to\infty]{} \xi(0)
\end{equation}
for each fixed $\tau\in\mathbb{R}$. 
The pointwise limit \eqref{eq:chilambdalimit} is however not the property responsible for the long time limit~\eqref{eq:F-to-xi2What}, as the coefficient in \eqref{eq:F-to-xi2What} is ${\Vert \xi \Vert}^2$, which concerns the whole profile of~$\xi$, not just~$\xi(0)$. The long time limit \eqref{eq:F-to-xi2What} 
holds even when $\xi(0)=0$ so that the pointwise limit in \eqref{eq:chilambdalimit} vanishes. 
Again, in the plateau scaling example,  
we show in Section \ref{sec: adiab / plat scalings} that $\xi$ is not continuous at $\tau=0$, 
and $\xi(0)$ is hence not defined, but the long time limit \eqref{eq:F-to-xi2What} still holds. 
These observations show that the long time limit \eqref{eq:F-to-xi2What} has come from the averaged sense in which $\chi_\lambda$ asymptotes to~$\xi(\cdot/\lambda)$, not from a pointwise sense. 

We also remark that the functions in an ASSF do not need to have compact support, nor do they need to have a uniform sign. The latter property will be important in the applications in Section~\ref{sec: negative switchings}. 

Under the circumstances described in Section~\ref{sec: stat response}, 
Proposition \ref{prop:chilambda} shows that 
the long time limit 
\eqref{eq:key-Flambda-limit} at fixed $E\ne0$ holds for an ASSF, 
with $c_1 = {\Vert \xi \Vert}^2$. 
When $\widehat{W}$ fails to be continuous at $E=0$, as is the case for circular motion in $2+1$ dimensions, Proposition \ref{prop:chilambda}
does however not immediately yield a simultaneous long-time, low-gap limit, but the Proposition will provide key elements in controlling this double limit, as we will see in Sections \ref{sec: Small gap long time}
and~\ref{sec: negative switchings}. 

For reference, we record three facts concerning ASSFs. First, because $\chi_\lambda$ in an ASSF are absolutely integrable by assumption, their Fourier transforms $\widehat{\chi}_\lambda(\omega)$ are continuous 
and $o(1)$ as $|\omega|\to\infty$.
Second, the hypotheses relating to $\eta$ imply that $\chi_\lambda$ is square-integrable for sufficiently large~$\lambda$. Indeed, applying~\eqref{eq:Vlimit} to a constant function $V$ and using Plancherel's theorem, we have $\|\chi_\lambda\|^2/\lambda\to\|\xi\|^2$ as $\lambda\to\infty$. Third, given that the $\chi_\lambda$ are continuously differentiable, their absolute integrability would be guaranteed if they are also assumed to have compact support. 

We end this section with the proof of Proposition~\ref{prop:chilambda}.
\begin{proof}
Define $\widetilde{\chi}_\lambda(t)=\chi_\lambda(\lambda t)$. 
Then $\widehat{\widetilde{\chi}}_\lambda(u)=\lambda^{-1}\widehat{\chi}_\lambda(u/\lambda)$, so the hypotheses imply that 
$|\widehat{\widetilde{\chi}}_\lambda(u)|\le \eta(u)$ almost everywhere in $u$ for each sufficiently large $\lambda$. By dominated convergence, it follows that the a.e.\ pointwise limit of the $\widehat{\widetilde{\chi}}_\lambda$ as $\lambda\to\infty$
defines an element of $L^2(\mathbb{R},\dd u/(2\pi))$ which we write as the Fourier transform $\widehat\xi$ of some $\xi\in L^2(\mathbb{R},\dd t)$.   
Making a change of variables, one has
\begin{equation}
    \frac{1}{2\pi} \intinf \dd\omega\, \frac{|\widehat{\chi}_\lambda(\omega)|^2}{\lambda} \, V(E,\omega) = \frac{1}{2\pi} \int_{-\infty}^\infty \dd u |\widehat{\widetilde{\chi}}_\lambda(u)|^2 V(E,u/\lambda)
\end{equation}
and since $V$ is bounded, we may use dominated convergence again, together with continuity of $V$ at $(E_*,0)$, to show that the double limit exists and is equal to
\begin{equation} 
    \int_{-\infty}^\infty\frac{\dd u}{2\pi} |\widehat{\xi}(u)|^2 V(E_*,0) = 
       {\Vert \xi \Vert}^2 \, V(E_*,0)
\end{equation}
using Plancherel's theorem.
The particular case \eqref{eq:F-to-xi2What} follows immediately with $V(E,\omega)=\widehat{\mathcal{W}}(E+\omega)$, and the last statement 
\eqref{eq:F-to-xi2What-fixedE}
is just the fact that the existence of a double limit implies the existence of the limit in $\lambda$ with $E$ held fixed. 
\end{proof}

\section{2+1 circular motion} \label{sec:circmotion}

We now specialise to the response of a detector in uniform circular motion in $2+1$ dimensions. 
We first write down the scaled response $\F_\lambda(E)$
\eqref{eq:Flambda} when $\chi_\lambda$ is a general ASSF\null.
We then find the 
long-time limit and the small-gap limit when the limits are taken in succession in each order, under an additional positivity and uniform boundedness assumption on~$\chi_\lambda$. Finally, we isolate the mathematical obstruction to a simultaneous long-time and small-gap limit, showing that this simultaneous limit will need further assumptions on $\chi_\lambda$ and on the relation between the duration and the gap.

\subsection{Response}

We consider a UDW detector in uniform circular motion in 
$(2+1)$-dimensional Minkowski space, and a field prepared in the Minkowski vacuum. 

In an adapted Lorentz frame, the circular motion worldline can be written as
\begin{equation} \label{eq: CM worldline}
    \mathsf{x}(\tau) = \bigl( \gamma \tau, R\cos(\tfrac{\gamma v \tau}{R}),R\sin(\tfrac{\gamma v \tau}{R}) \bigr),
\end{equation}
where $v$ is the orbital speed with $0<v<1$, $\gamma = {(1-v^2)}^{-1/2}$ is the Lorentz factor, and $R>0$ is the radius of the circular motion. The proper acceleration of this worldline is 
\begin{align}
\label{eq:proper-accel}
a= \frac{\gamma^2v^2}{R} . 
\end{align}
Due to the stationarity of the circular motion worldline, the response function is given by \eqref{eq: stationary RF}, where the Fourier transform of the vacuum Wightman function pulled back to the circular motion worldline \eqref{eq: CM worldline} is \cite{Biermann:2020bjh,Parry:2024jrm}
\begin{equation} \label{eq: CM W hat}
    \widehat{\W}(E) = \frac{1}{4} - \frac{1}{2\pi \gamma} \intoinf \dd z \, \frac{\sin(kEz)}{z}\frac{1}{\sqrt{1-v^2\sinc^2 \! z}},
\end{equation}
where $k=2R/(\gamma v)$.

Properties of $\widehat{\W}$ \eqref{eq: CM W hat} were analysed in \cite{Biermann:2020bjh,Parry:2024jrm}. A~property significant in what follows is that $\widehat{\W}$ is bounded, $0\le \widehat{\W} \le \frac12$. 
The bound from below follows from the Bochner--Schwartz theorem, Theorem IX.10 in~\cite{ReedSimon:vol2}, because $\W$ is positive type (cf.\ Theorem 2.1 in \cite{Fewster:2000}) and translationally invariant. The bound from below can also be obtained by more elementary considerations by observing that the response function \eqref{eq: stationary RF} is non-negative for all~$\FTChiSq$, by the perturbation theory calculation from which the response function emerges, and noting that $\FTChiSq$ can be arbitrarily sharply peaked. 
To see the bound from above, we note that $0\le \widehat{\W}$ implies that the second term in \eqref{eq: CM W hat} is bounded from below by $-\frac14$, and as the second term is odd in~$E$, the bound from above follows.
$\widehat{\W}(E)$ is continuous at $E\ne0$ but discontinuous at $E=0$, with well-defined nonzero limits as $E\to0$ from above and below, as shown in~\cite{Biermann:2020bjh}; this is characteristic of $2+1$ spacetime dimensions~\cite{Parry:2024jrm}. 

It is convenient to split $\widehat{\W}$ into three terms as
\begin{equation}\label{eq:Wsplitup}
    \widehat{\W}(E) = \frac{1}{4} - \frac{k}{2\pi\gamma} U(E)- \frac{\sgn(E)}{4\gamma},
\end{equation}
where 
\begin{equation}
    U(E) =   E\intoinf \dd z \,  \sinc(kEz)\left(\frac{1}{\sqrt{1-v^2\sinc^2 \! z}}-1\right)
\end{equation}
is a bounded odd function (using boundedness of~$\widehat{\W}$). 
In fact, $U$ is continuous everywhere, including $E=0$, 
by dominated convergence and continuity of the $\sinc$ function. 
In the split~\eqref{eq:Wsplitup}, we have subtracted from $\left(1-v^2\sinc^2 \! z\right)^{-1/2}$ its asymptotic large $z$ behaviour, equal to~$1$; the compensating term can be evaluated by the standard formula
$\int_0^\infty\dd z \, \frac{\sin(a z)}{z} = \tfrac12 \pi \sgn(a)$.

Inserting \eqref{eq:Wsplitup} into \eqref{eq: stationary RF} and~\eqref{eq:Flambda}, we obtain
\begin{equation} 
\label{eq:Fsplitup}
   \F_\lambda(E) 
   = \frac{1}{4\lambda} {\Vert \chi_\lambda \Vert}^2 - \frac{kE}{4\pi^2\gamma} \int_{-\infty}^{\infty} \dd\omega\, \frac{|\widehat{\chi}_\lambda(\omega)|^2}{\lambda} V(E,\omega) 
   -  \frac{\sgn(E)}{8\pi\gamma\lambda} \int_{-|E|}^{|E|}\dd \omega \, |\widehat{\chi}_\lambda(\omega)|^2 
\end{equation}
where, in the second and third terms, we have used the fact that $\FTChiSqlambda$ is even in $\omega$ because $\chi_\lambda$ is real-valued, and defined
\begin{equation}\label{eq:Vdef}
    V(E,\omega) = \frac{U(E+\omega)+U(E-\omega)}{2E} = 
    \intoinf \dd z \, \sinc(kEz)\cos(k\omega z)\left(\frac{1}{\sqrt{1-v^2\sinc^2 \! z}}-1\right).
\end{equation}
As the portion of the integrand in parentheses in \eqref{eq:Vdef} is absolutely integrable and the remaining factors are bounded and jointly continuous in $E$ and~$\omega$, it follows that $V$ is continuous and bounded on~$\mathbb{R}^2$. Furthermore, $V(E,\omega)$ is even in both arguments separately. Thus the first term in \eqref{eq:Fsplitup} is even in~$E$, while the second and third terms are odd.

\subsection{Successive long time and small gap limits}
\label{subsec:successive-limits}

We assume the switching function family $\chi_\lambda$ to be an ASSF in the sense of Definition \ref{def:assf}, 
so that the 
limit property \eqref{eq:key-Flambda-limit} holds as $\lambda\to\infty$. In addition, we assume that $\chi_\lambda$ are non-negative and bounded uniformly in~$\lambda$: $0\le \chi_\lambda \le M$, where $M$ is a positive constant, independent of~$\lambda$. 

\subsubsection{$\lambda\to\infty$ followed by $E\to0$}

When $E\ne0$ is fixed, the relevant notion of temperature in the $\lambda\to\infty$ limit is the infinite time detailed balance temperature 
$T(E)$~\eqref{eq: infinite time T}, based 
on the transition rate $\widehat{\W}(E)$~\eqref{eq: CM W hat}. In the subsequent small gap limit, we have \cite{Biermann:2020bjh} 
\begin{equation} \label{eq: W hat small E}
    \widehat{\W}(E) = \frac{\gamma-\sgn(E)}{4\gamma} + O(E), 
\end{equation}
and hence 
\begin{equation} \label{eq: 2+1 CM temp vanishes}
    T(E) = \frac{|E|}{\displaystyle{\log\!\left(\frac{\gamma+1}{\gamma-1}\right)}} + O(E^2).
\end{equation}

We see from \eqref{eq: 2+1 CM temp vanishes} that the temperature vanishes as $O(E)$ as $E\to 0$. The infinite time detailed balance temperature is thus not a good quantifier of the effect of acceleration on the detector's response in the small-gap limit. 
The mathematical reason for the vanishing small-gap temperature is the discontinuity of $\widehat{\W}(E)$ at $E=0$, seen in~\eqref{eq: W hat small E}. This discontinuity arises from the weak decay of $\W(\tau)$ at $\tau\to\pm\infty$, a phenomenon characteristic of circular motion in $2+1$ spacetime dimensions~\cite{Parry:2024jrm}. 

\subsubsection{$E\to0$ followed by $\lambda\to\infty$}

Suppose that $\lambda$ is fixed, and 
consider 
$\F_\lambda(E)$ \eqref{eq:Fsplitup} 
as $E\to0$. 
The first term in \eqref{eq:Fsplitup} is independent of~$E$.
In the third term, we observe that 
$|E|^{-1}\int_{-|E|}^{|E|}\dd\omega |\widehat{\chi}_\lambda(\omega)|^2\to 2|\widehat{\chi}_\lambda(0)|^2$ as $E\to 0$, 
by continuity of $\widehat{\chi}_\lambda$. 
In the second term, the $E\to 0$ limit of the integral may be obtained by dominated convergence, assuming $\lambda$ is so large that $\widehat{\chi}_\lambda$ is square-integrable by Proposition~\ref{prop:chilambda}. 
The result is the small $E$ expansion
\begin{equation}
    \F_\lambda(E) = \alpha_\lambda - \beta_\lambda E + o(E),
\end{equation}
where 
\begin{subequations}
\label{eq:alphabeta}
    \begin{align}
        \alpha_\lambda &=  \frac{1}{4\lambda}{\Vert\chi_\lambda\Vert}^2, \label{eq: alpha}\\
        \beta_\lambda &= \frac{1}{4\pi\gamma\lambda}|\widehat{\chi}_\lambda(0)|^2 + \frac{R}{2\pi^2\gamma^2v\lambda} \int_{-\infty}^{\infty} \dd\omega\, |\widehat{\chi}_\lambda(\omega)|^2 V(0,\omega) ,  \label{eq: beta}
    \end{align} 
\end{subequations}
using
$k=2R/(\gamma v)$.

Note that $\alpha_\lambda$ \eqref{eq: alpha} is positive. In $\beta_\lambda$~\eqref{eq: beta}, the first term is positive by the non-negativity assumption on~$\chi_\lambda$. 
The second term in \eqref{eq: beta} 
is positive for sufficiently large~$\lambda$, because it converges to a positive multiple of $\|\xi\|^2 V(0,0)$ as $\lambda\to\infty$ by Proposition~\ref{prop:chilambda}, and 
\begin{equation}
    V(0,0)= \intoinf \dd z \, \left(\frac{1}{\sqrt{1-v^2\sinc^2 \! z}}-1\right) , 
    \label{eq:V00}
\end{equation}
which is clearly positive.
Therefore, for fixed sufficiently large~$\lambda$, the small $E$ expansion of the detailed balance temperature is 
\begin{equation} \label{eq: T = alpha/2 beta}
    T_\lambda(E) = \frac{\alpha_\lambda}{2\beta_\lambda} + o(1).
\end{equation}

Consider now the subsequent $\lambda \to \infty$ limit in~\eqref{eq: T = alpha/2 beta}. 
In this limit, $\alpha_\lambda$ tends to a positive constant, by Proposition~\ref{prop:chilambda}, and the second term in $\beta_\lambda$ \eqref{eq: beta} tends to a positive constant, as noted above. In the first term in $\beta_\lambda$~\eqref{eq: beta}, we recall that $\chi_\lambda$ are by assumption non-negative and uniformly bounded in~$\lambda$, 
$0\le \chi_\lambda \le M$, where $M$ is a positive constant. From this, an  elementary estimate gives 
$\widehat{\chi}_\lambda(0) \ge M^{-1} {\Vert\chi_\lambda\Vert}^2$. The limit \eqref{eq:key-lambda-L2norm-limit} then shows that the first term in \eqref{eq: beta} is bounded below by $\lambda$ times a positive constant as $\lambda\to\infty$. 

Collecting, we see that the leading term in $T_\lambda(E)$ \eqref{eq: T = alpha/2 beta} vanishes as $O\bigl(\lambda^{-1}\bigr)$ 
as $\lambda \to \infty$. 
This shows that if we take the long time limit after the small gap limit, the detailed balance temperature \eqref{eq: finite time T} fails to serve as a good quantifier of how acceleration affects the detector’s response in the small gap limit.

\subsubsection{Summary}

In summary, when the infinite time limit and the small gap limit are taken in succession, in either order, 
for ASSFs that satisfy the positivity and boundedness assumptions stated in the first paragraph of Section~\ref{subsec:successive-limits}, 
the resulting temperature vanishes. These successive limits are hence not good quantifiers of how acceleration affects the detector’s response in the small gap limit.

\subsection{Obstruction to a simultaneous small gap and long time limit} \label{subsec:obstruction}

We now consider the simultaneous small gap and long time limit. 

For $\F_\lambda(E)$ \eqref{eq:Fsplitup} with a general ASSF, we 
have already observed that the discontinuity of $\widehat{\W}(E)$ at $E=0$ prevents us from reading off from Proposition \ref{prop:chilambda} a 
double limit for $\F_\lambda(E)$ as $E\to 0$ and $\lambda\to\infty$. The first and second term in \eqref{eq:Fsplitup} do however have a defined behaviour in this limit: the first term converges to $\|\xi\|^2/4$, while the integral in the second term has a nonzero double limit by Proposition~\ref{prop:chilambda}, recalling that $V$ is continuous and bounded with $V(0,0)>0$. Thus we have
\begin{equation}\label{eq:Fsplitup_dlimit}
  \F_\lambda(E) 
   = \frac{\|\xi\|^2}{4}(1+o_e(1))   - \frac{kE}{2\pi\gamma} \|\xi\|^2 V(0,0)(1+o_e(1)) 
   -  \frac{\sgn(E)}{8\pi\gamma\lambda} \int_{-|E|}^{|E|}\dd \omega \, |\widehat{\chi}_\lambda(\omega)|^2 
\end{equation} 
as $E\to 0$ and $\lambda\to\infty$ simultaneously, where the subscript $e$ on the $o_e(1)$ terms indicates that they are even in~$E$. In fact the first $o_e(1)$ term in \eqref{eq:Fsplitup_dlimit} is $E$-independent, and the evenness of the second $o_e(1)$ term follows because the second term in \eqref{eq:Fsplitup} is odd.

It is the third term in \eqref{eq:Fsplitup_dlimit} that does not have a defined behaviour as $E\to 0$ and $\lambda\to\infty$ simultaneously: the term depends on the magnitude of $|\widehat{\chi}_\lambda(\omega)|^2/\lambda$ over the interval $-|E| \le \omega \le |E|$ as $E\to 0$ and $\lambda\to\infty$, and this is sensitive to how $|E|$ and $\lambda$ are related to each other in the double limit, as the examples in Sections \ref{sec: Small gap long time} and \ref{sec: negative switchings} show. 

Controlling the third term in \eqref{eq:Fsplitup_dlimit} will be the key in the switching family constructions of Section \ref{sec: negative switchings} in which a positive temperature will be obtained in the small gap limit.

\section{Interlude: adiabatic and plateau scaled switching functions} \label{sec: adiab / plat scalings}

In this section we recall the definitions of two ASSF subfamilies introduced in~\cite{Fewster:2016ewy}, 
the adiabatic scaled family and the plateau scaled family. 
To avoid cluttering the notation, we denote each family by~$\chi_\lambda$, with the family being understood from the context. 

\subsection{Adiabatic scaling}
\label{subsec:adiabatic-scaling-def}

We define an adiabatic scaled switching function family by 
\begin{equation} \label{eq: adiabatic scaled switching function}
    \chi_\lambda(\tau) =  \chi(\tau/\lambda) , 
\end{equation}
where we recall that $\lambda>0$ is the dimensionless scaling parameter. 
We take the unscaled switching function $\chi$ 
to be in $C^1_0(\mathbb{R})$, 
non-negative, and not identically vanishing. This is as in \cite{Fewster:2016ewy} but relaxing $C^\infty_0(\mathbb{R})$ to~$C^1_0(\mathbb{R})$.

As noted in Section~\ref{sec:assf}, this scaling is an ASSF, 
and \eqref{eq:F-to-xi2What-fixedE} gives 
\begin{equation} 
\label{eq: scaled RF adiabatic limit}
     \F_\lambda(E) \xrightarrow[\lambda\to\infty]{} {\Vert \chi \Vert}^2 \, \widehat{\W}(E), 
\end{equation}
for each fixed~$E\neq 0$. 
Moreover, $\chi_\lambda$ are bounded uniformly in~$\lambda$. The family $\chi_\lambda$ hence satisfies the conditions in the first paragraph of Section~\ref{subsec:successive-limits}, 
and the results of Section \ref{subsec:successive-limits} about successive long time and small gap limits apply.

\subsection{Plateau scaling}
\label{subsec:plateauscaling}

We define a plateau scaled switching function family by
\begin{equation} \label{eq: plat scaling}
    \chi_\lambda(\tau) = \int_{-\infty}^\tau \dd\tau' \, \big( \psi(\tau')-\psi(\tau'-\tau_s-\lambda\tau_p) \big),
\end{equation}
where $\tau_s$ and $\tau_p$ are positive constants, 
$\psi\in C_0(\mathbb{R})$,
and $\lambda>0$ is again the dimensionless scaling parameter.
$\widehat{\psi}$~is smooth, it has falloff 
$\widehat{\psi}(\omega) = o(1)$ as $|\omega| \to \infty$, and it is square integrable. 
We assume $\psi$ to be non-negative, not identically vanishing, and with support contained in an interval of duration~$\tau_s$. 
This is as in \cite{Fewster:2016ewy} but relaxing 
$\psi$ to be $C_0(\mathbb{R})$ rather than $C^\infty_0(\mathbb{R})$. 

It follows that 
each $\chi_\lambda$ is in~$C^1_0(\mathbb{R})$, and the family $\chi_\lambda$ is bounded uniformly in~$\lambda$. Each 
$\chi_\lambda$~is switched on over an interval of duration~$\tau_s$, it then stays at constant positive value $\int_{-\infty}^{\infty} \dd\tau' \, \psi(\tau')$ for an interval of duration~$\lambda\tau_p$, and it is switched off over an interval of duration~$\tau_s$. The profiles in the switch-on and switch-off intervals are determined by~$\psi$. 
The scaling parameter $\lambda$ stretches only the plateau where $\chi_\lambda$ has a constant positive value, proportionally to~$\lambda$, 
while the switch-on and switch-off intervals remain of fixed duration.

We shall now show that $\chi_\lambda$ is an ASSF\null.

A~calculation from \eqref{eq: plat scaling} shows that  
\begin{equation}\label{eq:plateu-chilamblahat}
    \widehat{\chi}_\lambda(\omega) = \frac{1-\ee^{-\ii\omega(\tau_s+\lambda\tau_p)}}{\ii\omega}\widehat{\psi}(\omega)\,,
\end{equation}
where the formula is understood in the limiting sense at $\omega=0$. From \eqref{eq:plateu-chilamblahat} we find
\begin{equation}
\left|{\lambda}^{-1}\widehat{\chi}_\lambda(u/\lambda)
\right|^2 = {(\tau_p+\tau_s/\lambda)}^2
\sinc^2 \bigl(\tfrac{1}{2}(\tau_p+\tau_s/\lambda)u\bigr) |\widehat{\psi}(u/\lambda)|^2.
\label{eq:plateau-prop-chihatest}
\end{equation}
As $\psi$ is integrable, $|\widehat{\psi}(\omega)|\le b_1=\intinf\dd t\,|\psi(t)|>0$. 
As $\sinc$ is continuous and satisfies $\sinc^2 \! x \le x^{-2}$ for $x\ne0$, there exists a constant $b_2>1$ such that
$\sinc^2 \! x \le b_2^2 / (1 + x^2)$ for $x\in\mathbb{R}$. 
This gives the bound
$\sinc^2 \bigl(\tfrac{1}{2}(\tau_p+\tau_s/\lambda)u\bigr)
\le b_2^2 / \bigl(1 + \frac{1}{4}(\tau_p+\tau_s/\lambda)^2 u^2 \bigr)
\le b_2^2 / \bigl(1 + \frac{1}{4}\tau_p^2  u^2 \bigr)$. 
Finally, for $\lambda\ge1$, the factor ${(\tau_p+\tau_s/\lambda)}^2$ in \eqref{eq:plateau-prop-chihatest} is bounded by ${(\tau_p+\tau_s)}^2$. It follows that 
$ \left|{\lambda}^{-1}\widehat{\chi}_\lambda(u/\lambda) \right| \le \eta(u)$ 
for all $u\in\mathbb{R}$ and $\lambda\ge 1$, 
where 
$\eta(u) = b_1 b_2 (\tau_p+\tau_s) \bigl(1 + \frac{1}{4}\tau_p^2  u^2 \bigr)^{-1/2}$. 
As $\eta\in L^2(\mathbb{R},\dd u/(2\pi))$, 
the conditions of Proposition \ref{prop:chilambda} hold. 

From \eqref{eq:xi} and \eqref{eq:plateu-chilamblahat} we have 
\begin{equation}
\widehat{\xi}(u) = 
\lim_{\lambda\to\infty} 
\frac{\widehat{\chi}_\lambda(u/\lambda)}{\lambda}
=
\frac{1-\ee^{-\ii u \tau_p}}{\ii u}\widehat{\psi}(0) \,,
\end{equation}
pointwise for all $u\in\mathbb{R}$, where the formula is understood in the limiting sense at $u=0$. 
As $\psi$ is nonnegative and not identically vanishing, we have $\widehat{\psi}(0)>0$, so $\widehat{\xi}$ is nonvanishing 
as an element of $L^2(\mathbb{R},\dd u/(2\pi))$. 
This shows that $\chi_\lambda$ in an ASSF\null. 

Taking the inverse Fourier transform, we find that $\xi$ is $\widehat{\psi}(0)$ times the characteristic function of $[0,\tau_p]$. 
Hence 
$\|\xi\|^2 = \tau_p |\widehat{\psi}(0)|^2$, and from \eqref{eq:F-to-xi2What-fixedE} we have 
\begin{equation} 
\label{eq: scaled RF plateau limit}
    \mathcal{F}_\lambda(E) \xrightarrow[\lambda\to\infty]{}
    \tau_p |\widehat{\psi}(0)|^2 \widehat{\W}(E), 
\end{equation}
for each fixed~$E\ne0$.

As the family $\chi_\lambda$ is uniformly bounded in~$\lambda$, this family 
satisfies the conditions in the first  paragraph of Section~\ref{subsec:successive-limits}, 
and the results of Section \ref{subsec:successive-limits} about successive long time and small gap limits apply. 

Note that $\xi$ is discontinuous at $\tau=0$ and at $\tau=\tau_p$ and is undefined at these points. 
No physical pathology associated with this discontinuity is known to us.

\subsection{Comments}

Our assumptions about the adiabatic and plateau scalings here are chosen so that they will suffice for the 
joint small $E$ and long time analysis in Section~\ref{sec: Small gap long time}, and they involve weaker differentiability than the large $|E|$ analysis in~\cite{Fewster:2016ewy}. 
Yet weaker assumptions within the adiabatic and plateau scalings will be considered in Section~\ref{sec: adiabatic vs plateau new scaling}.

\section{Small gap with long time: double limit with adiabatic and plateau scalings} \label{sec: Small gap long time}

In this section we consider the $2+1$ circular motion response in the simultaneous small gap and long interaction time 
limit, for the adiabatic and plateau scalings of Section \ref{sec: adiab / plat scalings}. We further assume that the simultaneous limit is taken so that $E\to 0$ and $\lambda=\lambda(E)\to\infty$ where $\lambda(E)$ is an inverse power law, given by
\begin{equation} \label{eq: lambda = (S/E)^p}
    \lambda(E) = {\left(\frac{S}{|E|} \right)}^{\!\alpha},
\end{equation}
where $\alpha$ is a positive parameter that specifies the exponent in the power law, and $S$ is a positive constant with units of energy, introduced for consistency with our convention that the scaling parameter $\lambda$ is dimensionless. In an experimental setting, such as an analogue spacetime system \cite{Gooding:2020scc,Bunney:2023ude,Gooding:2025tfp}, 
$S$~would be chosen as an energy scale characterising the physical system.

\subsection{Adiabatic scaling} \label{sec: adiabatic scaling}

For the adiabatic scaled switching function family~\eqref{eq: adiabatic scaled switching function}, we have $\|\chi_\lambda\|^2/\lambda =\|\chi\|^2$ for all $\lambda$ and so the first error term in~\eqref{eq:Fsplitup_dlimit} vanishes, giving
\begin{equation} \label{eq: scaled RF adiabatic fn of E} 
  \F_{\lambda(E)}(E) 
   = \frac{1}{4}\|\chi\|^2   - \frac{kE}{2\pi\gamma} \|\chi\|^2 V(0,0)(1+o_e(1)) 
   -\frac{\sgn(E)}{8\pi\gamma} \int_{-\eta(|E|)}^{\eta(|E|)}\dd \omega \, \FTChiSq , 
\end{equation}
where in the last term we have changed variables, used $k=\frac{2R}{\gamma v}$, and written 
$\eta(|E|) : = S {(|E|/S)}^{1-\alpha}$. The second term in \eqref{eq: scaled RF adiabatic fn of E} 
can be written as $\sgn (E) O(|E|)$, where $O(|E|)$ is even in~$E$, and the notation encodes the evenness in the absolute value in the argument.

The small $E$ behaviour of the last term in \eqref{eq: scaled RF adiabatic fn of E} depends on the small $E$ behaviour of~$\eta(|E|)$, which is determined by~$\alpha$. There are three qualitatively different cases: 
$\alpha>1$, $\alpha=1$ and $0<\alpha<1$, which give the small $E$ behaviour $\eta(|E|)\to\infty$, $\eta(|E|) = S = \text{constant}$ and $\eta(|E|)\to 0$, respectively. We give the small $E$ expansion of this term in Appendix~\ref{app: adiabatic scaling RF expansion}.

Collecting, we find that the small $E$ expansion of \eqref{eq: scaled RF adiabatic fn of E} is
\begin{subequations}
\label{eq: adiab RF allalpha}
    \begin{numcases}{ \F_{\lambda(E)}(E) = } 
        \displaystyle{\frac{1}{4}{\Vert \chi \Vert}^2 \! \left[1 - \gamma^{-1}\sgn(E) \! \left(1 + O(|E|) + o\bigl(|E|^{\alpha-1}\bigr)\right) \right] } & \text{for $\alpha>1$}, \label{eq: adiab RF alpha>1}  \\[1ex]
        \displaystyle{\frac{1}{4}{\Vert \chi \Vert}^2 \! \left[1 - \gamma^{-1}\Xi\sgn(E) \bigl(1 + O(|E|) \bigr) \right] } & \text{for $\alpha=1$}, \label{eq: adiab RF alpha=1} \\[1ex]
        \displaystyle{\frac{1}{4}{\Vert \chi \Vert}^2 -\frac{{|\widehat{\chi}(0)|}^2 {S}^\alpha}{4\pi\gamma} \sgn(E) \! \left( |E|^{1-\alpha} + 
O \bigl( {|E|}^{3(1-\alpha)}\bigr) + O(|E|) \right)} & \text{for $\alpha<1$,}\label{eq: adiab RF alpha<1}
\end{numcases}
\end{subequations}
and the corresponding small $E$ temperatures are 
\begin{subequations}
\label{eq: adiab temp allalpha}
    \begin{numcases}{ T_{\lambda(E)}(E) = } 
        \frac{|E|}{\displaystyle{\log\!\left(\frac{\gamma+1}{\gamma-1}\right)}}\! \left(1 + O(|E|) + o\bigl(|E|^{\alpha-1}\bigr)\right) & \text{for $\alpha>1$}, \label{eq: adiab temp alpha>1}  \\[1ex]
        \frac{|E|}{\displaystyle{\log\!\left(\frac{\gamma+\Xi}{\gamma-\Xi}\right)}}\bigl(1 + O(|E|) \bigr) & \text{for $\alpha=1$}, \label{eq: adiab temp alpha=1}  \\[1ex]
        \displaystyle{\frac{\pi \gamma}{2} \frac{{\Vert \chi \Vert}^2}{{|\widehat{\chi}(0)|}^2} {\left(\frac{|E|}{S}\right)}^{\!\alpha} 
        \! \left( 1 + 
O \bigl( {|E|}^{2(1-\alpha)}\bigr) + O\bigl({|E|}^\alpha\bigr) \right)} & \text{for $\alpha<1$}, \label{eq: adiab temp alpha<1}
\end{numcases}
\end{subequations}
where
\begin{equation}
\label{eq:xi-def}
    \Xi = \frac{1}{2\pi{\Vert\chi\Vert}^2}\int_{-S}^{S}\dd \omega \, \FTChiSq.
\end{equation}
Note that $0<\Xi<1$, because in the $S\to\infty$ limit \eqref{eq:xi-def} becomes unity by the Plancherel theorem. The coefficient of $|E|$ in \eqref{eq: adiab temp alpha=1} is therefore well defined. 
The $O$ and $o$ error terms in 
\eqref{eq: adiab RF allalpha} and \eqref{eq: adiab temp allalpha}
are even in~$E$, as indicated by $|E|$ in their arguments.

From \eqref{eq: adiab temp allalpha} we now see that when $\lambda\to\infty$ and $E\to0$ under the inverse power law relation~\eqref{eq: lambda = (S/E)^p}, 
the temperature vanishes for any~$\alpha>0$.
For $\alpha\geq 1$, the temperature vanishes linearly in~$E$, whereas for $0<\alpha<1$, the temperature vanishes as the weaker power law~${|E|}^\alpha$.  

It is instructive to compare the temperatures \eqref{eq: adiab temp allalpha} to the successive long time and small gap limits of Section~\ref{subsec:successive-limits}. 
For $\alpha>1$, the temperature \eqref{eq: adiab temp alpha>1} agrees with the temperature~\eqref{eq: 2+1 CM temp vanishes}, which was obtained by first letting $\lambda \to \infty$ with $E$ fixed and then letting $E\to0$. 
Conversely, for $\alpha<1$, the power-law ${|E|}^\alpha \propto \lambda^{-1}$ in the temperature \eqref{eq: adiab temp alpha<1} agrees with the power-law $\lambda^{-1}$ in the temperature that was obtained by first letting $E\to0$ with $\lambda$ fixed and then letting $\lambda\to\infty$, as discussed below~\eqref{eq: T = alpha/2 beta}. 
In the intermediate case $\alpha=1$, 
where the interaction duration is inversely proportional to the detector gap, 
the temperature \eqref{eq: adiab temp alpha=1} 
has a linear falloff in terms of $E$ as for $\alpha>1$, and a falloff proportional to $\lambda^{-1}$ in terms of $\lambda$ as for $\alpha<1$. 
These comparisons illustrate the fact that while the interaction duration increases to infinity as $E\to 0$ for all~$\alpha$, the increase is more rapid for large $\alpha$ and less rapid for small~$\alpha$.

\subsection{Plateau scaling} \label{sec: plateau}

For the plateau scaled switching function family~\eqref{eq: plat scaling}, 
we recall 
from Section \ref{subsec:plateauscaling} 
that Proposition \ref{prop:chilambda} applies with $\|\xi\|^2 = \tau_p |\widehat{\psi}(0)|^2$. 
For the scaled response function, 
\eqref{eq:Fsplitup_dlimit} hence gives
\begin{align} 
  \F_{\lambda(E)}(E) 
   & = \frac{\tau_p |\widehat{\psi}(0)|^2}{4}\left(1+o_e(1) 
   - \frac{2kE}{\pi\gamma}   V(0,0)(1+o_e(1))\right) 
   \notag\\
   &\hspace{3ex}
   -  \frac{\sgn(E) \left(\tau_p + \tau_s {(|E|/S)}^\alpha \right)}{4\pi\gamma} 
\int_{-\eta(|E|)}^{\eta(|E|)}\dd u \, \frac{1-\cos u}{u^2}\left|\widehat{\psi}\left(\frac{|E|}{\eta(|E|)}u\right)\right|^2 , 
\label{eq: scaled RF plateau fn of E}
\end{align}
where now $\eta(|E|) := |E| \left( \tau_s + \tau_p {(S/|E|)}^\alpha\right)$, we have used~\eqref{eq:plateu-chilamblahat}, and $k=\frac{2R}{\gamma v}$ as before.  

As with the adiabatic scaling, 
there are three qualitatively different cases: 
$\alpha>1$, $\alpha=1$ and $0<\alpha<1$, which give the small $E$ behaviour $\eta(|E|)\to\infty$, $\eta(|E|) \to \tau_p S$ and $\eta(|E|)\to 0$, respectively. We give the small $E$ expansion of the last term in \eqref{eq: scaled RF plateau fn of E}
in Appendix \ref{app: [plateau] scaling RF expansion}.

Collecting, we find that the small $E$ expansion of \eqref{eq: scaled RF plateau fn of E} is 
\begin{subequations}
\label{eq: plat RF allalpha}
    \begin{numcases}{ \F_{\lambda(E)}(E) = } 
         \displaystyle{\frac{\tau_p}{4}|\widehat{\psi}(0)|^2 
         \Bigl[ 
         \bigl(1 + o_e(1) \bigr) 
         - \gamma^{-1}\sgn(E) \bigl(1 + o_e(1) \bigr) 
         \Bigr]} & \text{for $\alpha>1$}, \label{eq: plat RF alpha>1}  \\[1ex]
        \displaystyle{\frac{\tau_p}{4}|\widehat{\psi}(0)|^2 
         \Bigl[ 
         \bigl(1 + o_e(1) \bigr) 
         - \gamma^{-1}\zeta\sgn(E) \bigl(1 + o_e(1) \bigr) 
         \Bigr]} & \text{for $\alpha=1$}, \label{eq: plat RF alpha=1} \\[1ex]
       \displaystyle{\frac{\tau_p}{4}|\widehat{\psi}(0)|^2 
       \! 
         \left[ 
         \bigl(1 + o_e(1) \bigr) 
         - \frac{\tau_p S}{\pi\gamma}  \left(\frac{|E|}{S}\right)^{1-\alpha} \sgn(E)
         \bigl(1 + o_e(1) \bigr) 
         \right]} & \text{for $\alpha<1$,}\label{eq: plat RF alpha<1}
\end{numcases}
\end{subequations}
and the corresponding small $E$ temperatures are
\begin{subequations}
\label{eq: plat temp allalpha}
    \begin{numcases}{ T_{\lambda(E)}(E) = } 
        \frac{|E|}{\displaystyle{\log\!\left(\frac{\gamma+1}{\gamma-1}\right)}}
        \bigl(1 + o_e(1) \bigr) & \text{for $\alpha>1$} \label{eq: plat temp alpha>1}  \\[1ex]
        \frac{|E|}{\displaystyle{\log\!\left(\frac{\gamma+\zeta}{\gamma-\zeta}\right)}} 
        \bigl(1 + o_e(1) \bigr) & \text{for $\alpha=1$} \label{eq: plat temp alpha=1}  \\[1ex]
        \displaystyle{\frac{\pi\gamma}{2\tau_p} {\left(\frac{|E|}{S}\right)}^{\!\alpha} 
        \bigl(1 + o_e(1) \bigr) } 
        & \text{for $\alpha<1$,}\label{eq: plat temp alpha<1}
\end{numcases}
\end{subequations}
where
\begin{equation}
\label{eq:zeta-def}
    \zeta = \frac{1}{\pi}\int_{-\tau_p S}^{\tau_pS} \dd u \, \frac{1-\cos u}{u^2} = \frac{2}{\pi}\left(\Si(\tau_p S) + \frac{\cos(\tau_p S)-1}{\tau_p S}\right) , 
\end{equation}
and $\Si(z)=\int_0^z \dd z\,\sin(z)/z$ is the sine integral~\cite{NIST}. 
Note that $0<\zeta<1$, and that $\zeta\to 1$ as $\tau_p S \to\infty$~\cite{NIST}.
The coefficient of 
$|E|$ in \eqref{eq: plat temp alpha=1} is therefore well defined. 

From \eqref{eq: plat temp allalpha} we now see that the temperature again vanishes when 
$\lambda\to\infty$ and $E\to0$ under the inverse power law relation~\eqref{eq: lambda = (S/E)^p}, for any~$\alpha>0$. Furthermore, the power of $|E|$ in the leading term in \eqref{eq: plat temp allalpha} is identical to the power of $|E|$ in the leading term in the adiabatic scaling temperature~\eqref{eq: adiab temp allalpha}, and for $\alpha>1$ the leading terms in \eqref{eq: adiab temp alpha>1}
and 
\eqref{eq: plat temp alpha>1}
are in fact fully identical. 

\subsection{Summary}

To summarise, in the simultaneous long time and small gap limit, under the power-law relation~\eqref{eq: lambda = (S/E)^p}, the effective temperature vanishes, both for adiabatic-scaled and plateau-scaled switching functions. The behaviour of the effective temperature is qualitatively similar for the two scalings.

\section{Positive small gap temperature from interaction sign changes} \label{sec: negative switchings}

In Section \ref{sec: Small gap long time}, we found that in the simultaneous limit of long interaction and small energy gap, 
under the inverse power law relation \eqref{eq: lambda = (S/E)^p}
between the gap and the interaction duration, 
the detailed balance temperature \eqref{eq: finite time T} vanishes for both the adiabatic and plateau scaled switching families that were defined in Section~\ref{sec: adiab / plat scalings}. 
In this section, we show that a key assumption behind this property was the non-negativity of the switching functions, and we show how a positive temperature
in the long-interaction-and-small-gap limit can be recovered by allowing the switching functions to change sign.

In Section \ref{sec: conditions on chi} we work within a general ASSF of Definition~\ref{def:assf}, identifying the key necessary property for recovering a positive temperature in the simultaneous limit of small gap and long time.  
Sections \ref{sec: adiabatic vs plateau new scaling}--\ref{sec: compact support} present examples, both with compact support and with noncompact support, where a positive temperature in the limit is recovered, under the inverse power law relation \eqref{eq: lambda = (S/E)^p}
between the gap and the interaction duration.

\subsection{Recovery of a positive temperature} \label{sec: conditions on chi} 

Let the family $\chi_\lambda$ be an ASSF as given in Definition~\ref{def:assf}. By assumption, $\chi_\lambda$ is absolutely integrable and in~$C^1$. 
By Proposition~\ref{prop:chilambda}, $\chi_\lambda$ is square-integrable for sufficiently large~$\lambda$, and 
\begin{equation}
    \|\chi_\lambda\|^2/\lambda\to \|\xi\|^2 > 0 
\end{equation}
as $\lambda\to\infty$.

We assume that $\lambda$ and $E$ are related by $\lambda = \lambda(E)$, 
where the positive-valued function $\lambda(E)$ is even and satisfies $\lambda(E) \to \infty$ as $E\to0$.  

We wish to recover a positive temperature in the $E\to0$ limit. 
From \eqref{eq: finite time T} 
we see that this happens if and only if 
\begin{align}
\frac{\F_{\lambda(E)}(-E)}{\F_{\lambda(E)}(E)} = 1 + \frac{E}{T_0} + o(E)
\label{eq:FlambdaE(E)-ratio}
\end{align}
as $E\to 0$, where $T_0$ is a positive constant, and we then have $T_{\lambda(E)}(E) = T_0 + o(1)$ as $E\to 0$. 
This means that the crucial issue is the balance of the even and odd parts of $\F_{\lambda(E)}(E)$. 

From \eqref{eq:Fsplitup_dlimit}, using the evenness of $\lambda(E)$, we have the decomposition 
\begin{subequations}
\label{eq:F-evenandodd}
\begin{align}
\F_{\lambda(E)}(E) 
& =
\F^{\textnormal{even}}_{\lambda(E)}(E)+\F^{\textnormal{odd}}_{\lambda(E)}(E) , 
\\
  \F^{\textnormal{even}}_{\lambda(E)} (E) 
   & = \frac{\|\xi\|^2}{4}(1+o_e(1)) 
    , 
    \label{eq:F-even}
   \\
   \F^{\textnormal{odd}}_{\lambda(E)} (E) 
   & = 
   - E\frac{\pi}{4a} \|\xi\|^2 I(v)(1+o_e(1)) 
   -  \frac{\sgn(E)}{8\pi\gamma\lambda(E)} \int_{-|E|}^{|E|}\dd \omega \, |\widehat{\chi}_{\lambda(E)}(\omega)|^2 , 
   \label{eq:F-odd}
\end{align} 
\end{subequations}
where in the first term in \eqref{eq:F-odd} we have written $k=2R/(\gamma v)$ in terms of the proper acceleration $a$ \eqref{eq:proper-accel} as $k=2\gamma v/a$, 
and we have written $V(0,0)$ \eqref{eq:V00} in terms of the function 
\begin{equation}\label{eq: I(v)}
     I(v)=\frac{4v}{\pi^2}\int_0^\infty\dd z \left( \frac{1}{\sqrt{1-v^2\sinc^2 \! z}}-1 \right). 
\end{equation}
By the positivity of $\F^{\textnormal{even}}_{\lambda(E)} (E)$ \eqref{eq:F-even} as $E\to0$, 
the positive temperature condition 
\eqref{eq:FlambdaE(E)-ratio} holds if and only if 
\begin{equation}
    \frac{\F^{\textnormal{odd}}_{\lambda(E)}(E)}{\F^{\textnormal{even}}_{\lambda(E)}(E)} = -\frac{E}{2T_0} +o(E)
\label{eq:FoddFeven-ratio}
\end{equation}
as $E\to 0$. 
By \eqref{eq:F-even} and~\eqref{eq:F-odd}, \eqref{eq:FoddFeven-ratio} holds if and only if 
\begin{equation}
    \F^{\textnormal{odd}}_{\lambda(E)}(E) = -\frac{ \|\xi\|^2}{8T_0}
    E(1+o_e(1)) 
\label{eq:F-odd-wishlist}
\end{equation}
as $E\to 0$.

Comparing now 
\eqref{eq:F-even} and~\eqref{eq:F-odd-wishlist}, we see that a positive temperature $T_0$ is attained if and only if 
\begin{equation}
    \frac{1}{|E|\lambda(E)} \int_{-|E|}^{|E|}\dd \omega \, |\widehat{\chi}_{\lambda(E)}(\omega)|^2 
\end{equation}
has a finite, possibly zero, limit as $E\to 0$.
A~nonzero limit would modify the limiting temperature~$T_0$, so we focus on the case where the limit vanishes. This happens when 
\begin{equation} \label{eq:SFS}
    \frac{1}{\lambda(E)} \int_{-|E|}^{|E|}\dd\omega \, {|\widehat{\chi}_{\lambda(E)}(\omega)|}^2 = o(|E|) \ \ \text{as $E\to0$}.  \hspace{1cm} \text{(SFS)}
\end{equation}
We refer to \eqref{eq:SFS} as the Small Frequency Suppression (SFS) condition. 
When SFS holds, the 
small $E$ expansions of $\F^{\textnormal{odd}}_{\lambda(E)}(E)$ and $T_{\lambda(E)}(E)$ become 
\begin{subequations}
\label{eq:nonzero-F-and-temp}
\begin{align}
    \F^{\textnormal{odd}}_{\lambda(E)}(E) &= - \frac{\pi I(v)\|\xi\|^2}{4a}E(1 + o_e(1)),
\\
    T_{\lambda(E)}(E) &= \frac{a}{2\pi I(v)} + o(1), 
\label{eq: non-zero temp}
\end{align}
\end{subequations}
where we recall that $I(v)$ is given in \eqref{eq: I(v)}  
and $a$ is the proper acceleration~\eqref{eq:proper-accel}. 

We emphasise that the SFS condition \eqref{eq:SFS} is a condition on the last term in~\eqref{eq:Fsplitup_dlimit}. This term in \eqref{eq:Fsplitup_dlimit} is precisely the one that does not have a defined behaviour as $E\to 0$ and $\lambda\to\infty$ independently, for a general ASSF\null.

We summarise. 
When the simultaneous small gap and long duration limit is taken so that $\lambda = \lambda(E)$, where the even positive-valued function $\lambda(E)$ satisfies $\lambda(E) \to \infty$ as $E\to0$, a positive small gap temperature is achieved for an ASSF satisfying the SFS condition~\eqref{eq:SFS}. 
The small gap temperature is given by the leading term shown in~\eqref{eq: non-zero temp}, and this temperature is insensitive to the detail of the switching beyond the SFS condition~\eqref{eq:SFS}. 
The function $I(v)$ in \eqref{eq:nonzero-F-and-temp} encodes the ratio of the small gap temperature and the usual linear acceleration Unruh temperature $a/(2\pi)$. 
We show in Appendix \ref{app: I(v) asymptotics} that $I(v)$ is increasing in $v$ and it has the asymptotic forms $I(v) = \tfrac{1}{\pi}v^3 + O(v^5)$ as $v\to 0$ and $I(v) = -\tfrac{2\sqrt{3}}{\pi^2}\log(1-v) +O(1)$ as $v\to 1$.

What remains to show is that ASSFs satisfying SFS exist for some $\lambda(E)$.
We turn to this question next.

\subsection{Adiabatic and plateau scalings revisited} \label{sec: adiabatic vs plateau new scaling}

In this section we revisit the adiabatic and plateau scaled switching functions of Section 
\ref{sec: adiab / plat scalings} in the light of the SFS condition~\eqref{eq:SFS}. 
We assume $\lambda(E)$ to be the inverse power law \eqref{eq: lambda = (S/E)^p} as in Section~\ref{sec: Small gap long time}. We ask whether the zero temperature conclusion of Section \ref{sec: Small gap long time} can be circumvented by some reasonable modification of the technical assumptions about these switching families. 

\subsubsection{Adiabatic scaling} 
\label{subsubsec:adiab-revisited}

We consider first the adiabatic scaled family $\chi_\lambda = \chi(\tau/\lambda)$, where now $\chi$ is in $C^1(\mathbb{R})$, 
absolutely integrable and square integrable, and not identically vanishing. 
Proposition \eqref{prop:compact-prop} then applies with $\eta = |\chi|$, providing $\xi = \chi$. This shows that $\chi_\lambda$ is an ASSF\null. Compared with the assumptions of Sections \ref{sec: adiab / plat scalings} and~\ref{sec: Small gap long time}, we have replaced compact support by square integrability and dropped the assumption of uniform sign. The assumptions of Section \ref{sec: conditions on chi} hence hold. 

We wish to examine the SFS condition~\eqref{eq:SFS}. As $\widehat{\chi_\lambda}(\omega) = \lambda \widehat\chi(\lambda\omega)$, the left-hand side of \eqref{eq:SFS} equals 
\begin{equation}\label{eq: adiab condition sfs}
    \frac{1}{\lambda(E)} \int_{-|E|}^{|E|}\dd\omega \, {|\widehat{\chi}_{\lambda(E)}(\omega)|}^2 = \int_{-\eta(|E|)}^{\eta(|E|)}\dd u \,{|\widehat{\chi}(u)|}^2 ,
\end{equation}
where we have changed variables by $u = \lambda \omega$, and 
$\eta(E) = S {(|E|/S)}^{1-\alpha}$. 
The expression \eqref{eq: adiab condition sfs} is equal to \eqref{eq: A(E) adiabatic} in Appendix~\ref{app: adiabatic scaling RF expansion}. Adapting the analysis of Appendix 
\ref{app: adiabatic scaling RF expansion} to our present, weaker assumptions, we find  
\begin{equation} \label{eq: adiab condition C 2}
\int_{-\eta(|E|)}^{\eta(|E|)}\dd u \,{|\widehat{\chi}(u)|}^2 \sim 
\begin{cases}
        2\pi {\Vert \chi \Vert}^2& \text{ for } \alpha>1 \\
        2 \pi \Xi {\Vert \chi \Vert}^2& \text{ for } \alpha =1 \\
        4 \pi S {|\widehat{\chi}(0)|}^2
        {(|E|/S)}^{1-\alpha} & \text{ for } 0<\alpha<1 ,
    \end{cases}
\end{equation}
where $\Xi$ is defined in \eqref{eq:xi-def} and satisfies $0<\Xi<1$, 
and for $0<\alpha<1$ we have assumed $\widehat{\chi}(0)\ne0$. The SFS condition \eqref{eq:SFS} is hence not satisfied when $\widehat{\chi}(0)\ne0$. This is consistent with what we found in Section~\ref{sec: adiabatic scaling}. 

When $\widehat{\chi}(0)=0$ and $0<\alpha<1$, however, the term shown in \eqref{eq: adiab condition C 2} vanishes. In this case SFS is satisfied if 
$\widehat{\chi}(\omega) \to 0$ sufficiently rapidly as $\omega \to 0$. For example, if $\widehat{\chi}(\omega) = O(|\omega|^q)$ as $\omega \to 0$ with some positive constant~$q$, then 
\begin{align}
\label{eq:adiab-powerlaw-condition}
\int_{-\eta(|E|)}^{\eta(|E|)}\dd u \,{|\widehat{\chi}(u)|}^2
= O \! \left( {|E|}^{(2q+1)(1-\alpha)}\right)
\end{align}
as $E \to 0$. This satisfies SFS when $q>\frac{\alpha}{2(1-\alpha)}$. 

We conclude that when $0 < \alpha < 1$, a positive temperature at small gap is recovered from the adiabatic scaling if 
$\widehat{\chi}(\omega) \to 0$ sufficiently rapidly as $\omega \to 0$. By continuity of~$\widehat\chi$, we then have 
\begin{equation} 
\label{eq: int chi = 0}
0 = \widehat{\chi}(0) = \intinf \dd\tau \, \chi(\tau), 
\end{equation} 
and indeed $\intinf \dd\tau \, \chi_\lambda(\tau) = 0$ for all~$\lambda>0$. The switching function hence cannot be everywhere non-negative: it must change sign at least once.  We shall give examples of noncompact and compact support in Section~\ref{subsec:adiab-examples}.

We note in passing that if $|\widehat{\chi}(\omega)|\to0$ as $\omega\to0$ but more slowly than any positive power of~$|\omega|$, SFS \eqref{eq: adiab condition sfs} does not hold for the power-law $\lambda(E)$ \eqref{eq: lambda = (S/E)^p}, 
for any value of~$\alpha$, as we show in Appendix~\ref{app: growth of lambda}\null. 
We further give in Appendix \ref{app: growth of lambda} an example in which $|\widehat{\chi}(\omega)|$ is asymptotic to $1/\sqrt{-\log|\omega|}$ as $\omega\to0$, 
and SFS holds when $\lambda(E)\to\infty$ as $E\to0$ so that $\lambda(E) = o(\log|E|)$. 
Situations of this type are however likely to have a weaker decay of $\chi_\lambda(\tau)$ as $\tau \to \pm\infty$, and they are therefore less likely to be of interest for analogue spacetime experiments.

\subsubsection{Plateau scaling} 

We next consider the plateau scaled family $\chi_\lambda$ \eqref{eq: plat scaling} as defined in Section~\ref{subsec:plateauscaling}, 
except that we now 
allow $\psi$ to have a nonuniform sign. 
We continue to assume that $\psi$ has a non-empty support contained in an interval of duration~$\tau_s$. We further continue to assume that $\widehat\psi(0)\ne0$: this is necessary and sufficient for $\chi_\lambda$ to be nonvanishing in the plateau of duration $\tau_p\lambda$, between the switch-on and switch-off intervals of duration~$\tau_s$. 
The assumptions of Section \ref{sec: conditions on chi} hence hold. 

We wish to examine the SFS condition~\eqref{eq:SFS}. 
Using~\eqref{eq:plateu-chilamblahat}, 
the left-hand side of \eqref{eq:SFS} equals 
\begin{equation} \label{eq: plateau condition C}
\frac{1}{\lambda(E)} \int_{-|E|}^{|E|}\dd\omega \, {|\widehat{\chi}_{\lambda(E)}(\omega)|}^2 
= 2 
\left(\tau_p + \tau_s \left(\frac{|E|}{S}\right)^{\! \alpha}\right)
\int_{-\eta(|E|)}^{\eta(|E|)}\dd u \, \frac{1-\cos u}{u^2}\left|\widehat{\psi}\left(\frac{|E|}{\eta(|E|)}u\right)\right|^2 ,
\end{equation}
where we have changed variables by $u = \lambda \omega$, and now 
$\eta(|E|) := |E| \left( \tau_s + \tau_p {(S/|E|)}^\alpha\right)$. 
The expression \eqref{eq: plateau condition C}
is proportional to \eqref{eq: B(E)} in Appendix~\ref{app: [plateau] scaling RF expansion}. 
Adapting the analysis of Appendix 
\ref{app: [plateau] scaling RF expansion} to our present assumptions, we find 
\begin{equation} \label{eq: plateau condition C 2}
\frac{1}{\lambda(E)} \int_{-|E|}^{|E|}\dd\omega \, {|\widehat{\chi}_{\lambda(E)}(\omega)|}^2
\sim  \begin{cases}
        2\pi \tau_p |\widehat{\psi}(0)|^2 & \text{ for } \alpha>1 \\
        2 \pi \zeta \tau_p |\widehat{\psi}(0)|^2 & \text{ for } \alpha=1 \\
        2 \tau_p (\tau_p S) |\widehat{\psi}(0)|^2 {(|E|/S)}^{1-\alpha}& \text{ for } 0<\alpha<1,
    \end{cases}
\end{equation}
where $\zeta$ is defined in \eqref{eq:zeta-def} and satisfies $0<\zeta<1$. 
The SFS condition \eqref{eq:SFS} is hence not satisfied for any $\alpha$, given that $\widehat{\psi}(0)\ne0$ by assumption. This is consistent with what we found in Section~\ref{sec: plateau}. 

We conclude that a plateau scaled switching function cannot recover a positive temperature at small gap, even under the weakened assumptions of the present section. 

\subsection{Adiabatic scaling examples} 
\label{subsec:adiab-examples} 

In this section we present two adiabatic scaled switching families for which the SFS condition \eqref{eq:SFS} holds, and the small gap temperature is hence positive and given by the leading term in~\eqref{eq: non-zero temp}.

\subsubsection{Adiabatic scaling with power-law or Gaussian decay}\label{subsec:adiab-power-or-gaussian-decay}

As a first example, 
we consider the unscaled switching function and its Fourier transform given by 
\begin{subequations}
\label{eq: chi-chihat example 1}
    \begin{align} 
    \chi(\tau)& =   N_q \frac{\sqrt{S}}{2\pi} \, 2^{(q+1)/2}
    \, \Gamma\left(\frac{q+1}{2}\right)  \exp\!\left( -\frac{S^2\tau^2}{2} \right){}_1F_1\left( -\frac{q}{2};\frac{1}{2};\frac{S^2\tau^2}{2} \right), \label{eq: chi example 1} \\
    \widehat{\chi}(\omega) &= \frac{N_q}{\sqrt{S}}
    \left(\frac{|\omega|}{S}\right)^q \exp\!\left( -\frac{\omega^2}{2S^2}  \right), \label{eq: chihat example 1}
    \end{align}
\end{subequations}
where $q$ is a dimensionless positive constant, 
$S$ is a positive constant with units of energy, ${}_1F_1$ is the confluent hypergeometric function of the first kind~\cite{NIST}, 
and $N_q = \sqrt{2\pi/\Gamma\bigl(q+\frac12\bigr)}$.  
\eqref{eq: chi example 1}~may be obtained from \eqref{eq: chihat example 1} by using 3.952.8 in~\cite{Gradshteyn:1943cpj}. 
The normalisation constant $N_q$ has been chosen so that ${\Vert\chi\Vert}^2 = 1$ and ${\Vert\widehat\chi\Vert}^2 = 2\pi$. Plots of $\chi$ and $\widehat\chi$ for selected values of $q$ are shown in 
Figures 
\ref{fig: chi example 1}
and~\ref{fig: chihat example 1}. 
The scaled switching function and its Fourier transform are given by $\chi_\lambda(\tau) =  \chi(\tau/\lambda)$ and $\widehat{\chi}_\lambda(\omega) = \lambda \widehat{\chi}(\lambda\omega)$. 

It follows by elementary considerations that $\chi$ and $\widehat\chi$ satisfy the 
assumptions stated in the first paragraph of Section~\ref{subsubsec:adiab-revisited}. 
In particular, $\chi$ has noncompact support, but it is integrable and square integrable, as is seen from the large $|\tau|$ falloff 
\begin{equation}\label{eq: chi asymp example 1}
\chi(\tau) \sim 
\begin{cases}
{\displaystyle 
N_q \sqrt{S} \,  
\frac{ \, 2^q \Gamma(\frac{q+1}{2})}{\sqrt{\pi}\,\Gamma(-\frac{q}{2})}
\frac{1}{{|S\tau|}^{q+1}}}
& \text{for $q\ne2n$, $n=1,2,\ldots$} \ , 
\\[4ex]
{\displaystyle N_{2n} {(-1)}^n \frac{\sqrt{S}}{\sqrt{2\pi}}   \, {(S\tau)}^{2n} e^{-S^2\tau^2/2}} 
& \text{for $q=2n$, $n=1,2,\ldots$} \ , 
\end{cases}
\end{equation}
where the power-law decay for $q\ne2n$ follows from 13.2.4 and 13.2.23 in~\cite{NIST}, 
and the Gaussian decay for $q=2n$ follows by observing from 13.6.16 in \cite{NIST} that in this case  
\begin{equation} 
\label{eq: chi example 1 even order}
    \chi(\tau) = 
    N_{2n} \frac{\sqrt{S}}{\sqrt{2\pi}} \frac{{(-1)}^n}{2^n} \, 
    \exp\!\left( -\frac{S^2 \tau^2}{2} \right) H_{2n} \! \left( \frac{S\tau}{\sqrt{2}} \right),
\end{equation}
where $H_{2n}$ are the Hermite polynomials. 

\begin{figure}[p] 
    \centering
    \includegraphics[width=0.9\textwidth]{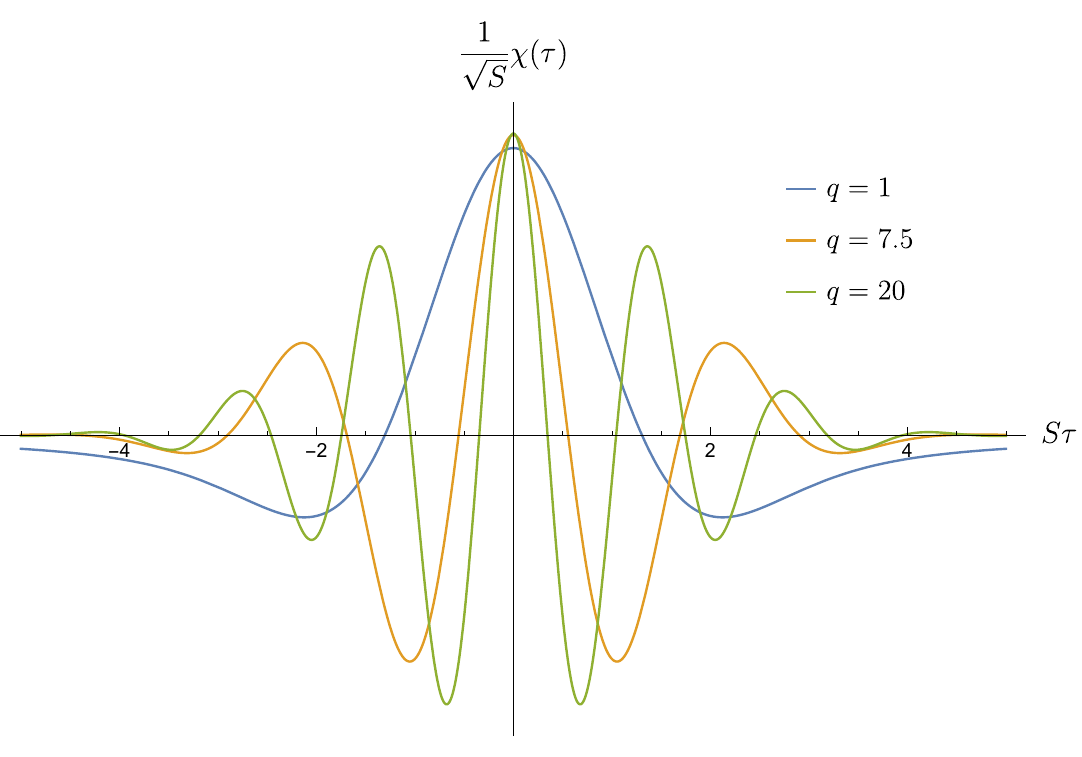}
    \caption{Plots of $\chi(\tau)$ \eqref{eq: chi example 1} for $q=1$, $q=7.5$ and $q=20$. Each curve crosses the horizontal axis at least once at positive $\tau$ and at least once at negative~$\tau$, as is required by the property $\int_{-\infty}^{\infty} \chi(\tau) \, d\tau =0$ and the evenness of~$\chi$.}
    \label{fig: chi example 1}
\end{figure}
\begin{figure}[p]
    \centering
    \includegraphics[width=0.9\textwidth]{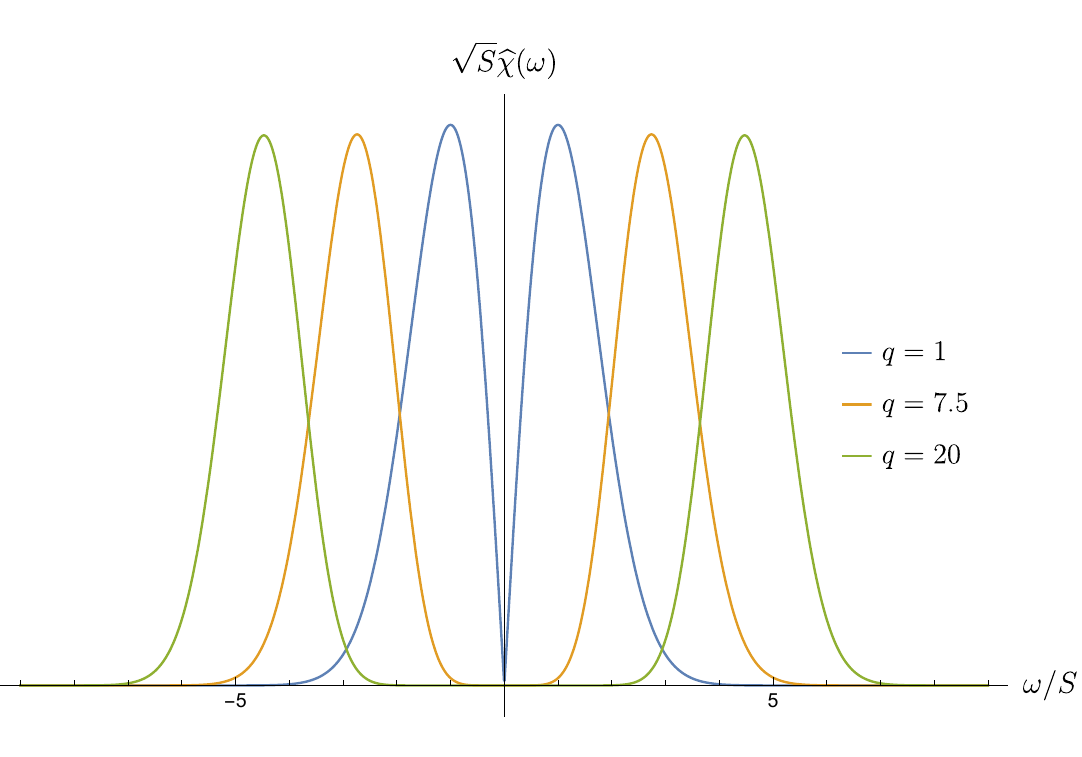}
    \caption{Plots of $\widehat{\chi}(\omega)$ \eqref{eq: chihat example 1} for $q=1$, $q=7.5$ and $q=20$. Each curve displays a characteristic ``double-peak'' structure around the global minimum at $\omega=0$ where $\widehat\chi(0)=0$.}
    \label{fig: chihat example 1}
\end{figure}

Now, to consider SFS~\eqref{eq:SFS}, we note from \eqref{eq: chihat example 1} that  $\widehat{\chi}(\omega) = O(|\omega|^q)$ as $|\omega|\to0$. 
The discussion around \eqref{eq:adiab-powerlaw-condition} then shows that SFS is satisfied provided $q>\frac{\alpha}{2(1-\alpha)}$, where $\alpha$ is the exponent in the power-law scaling 
\eqref{eq: lambda = (S/E)^p} and $0<\alpha<1$. 

The characteristic features of $\chi$ and $\widehat\chi$ are clear from the plots shown in 
Figures 
\ref{fig: chi example 1}
and 
\ref{fig: chihat example 1}. 
$\chi$~has zeroes, as is required by the property $\int_{-\infty}^{\infty} \chi(\tau) \, d\tau =0$, and $\widehat{\chi}$ has a double peak around the origin, as is required by the property $\widehat\chi(0)=0$ and the evenness of~$\widehat\chi$.

\subsubsection{Adiabatic scaling with compact support}\label{subsec:adiab-compact-support}

\begin{figure}[p]
    \centering
    \includegraphics[width=0.9\textwidth]{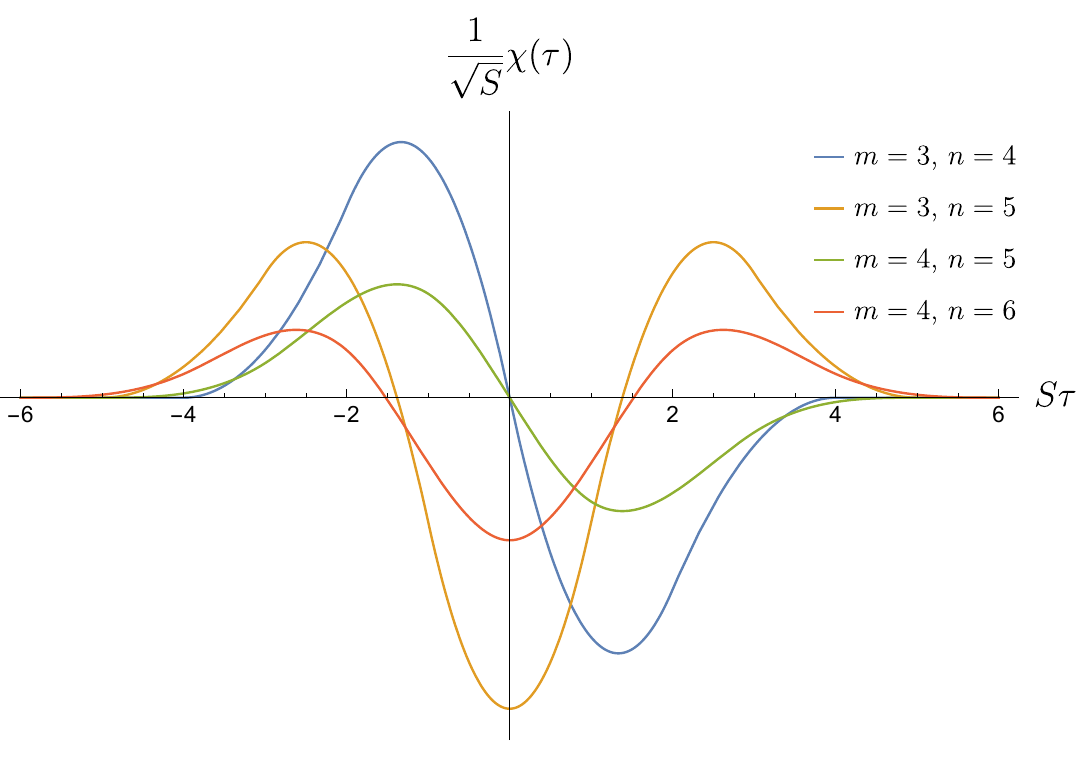}
    \caption{Plots of the compact support switching function $\chi$ \eqref{eq: example 2 adiabatic chi} for selected values of $m$ and $n$. The support $-n \le S\tau \le n$ and and the parity ${(-1)}^{n-m}$ of are clear in the plots.}
    \label{fig: chi example 2} 
\end{figure}

\begin{figure}[p]
    \centering
    \includegraphics[width=0.9\textwidth]{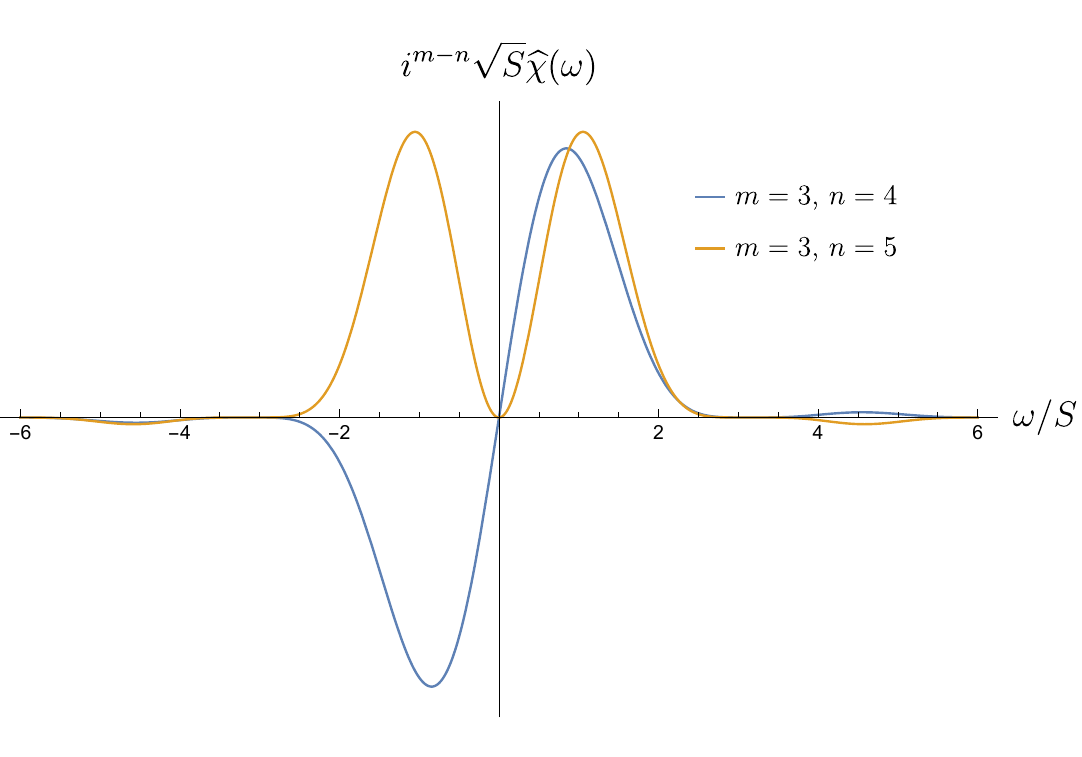}
    \caption{Plots of $\ii^{m-n}\widehat{\chi}$ \eqref{eq: chi hat example 2} that correspond to two of the switching functions plotted in Figure~\ref{fig: chi example 2}. A dominant double peak within $-\pi <  \omega/S < \pi$ is clear in the plots, with power-law suppressed peaks at larger~$|\omega/S|$.}
    \label{fig: chihat example 2}
\end{figure}

As a second example, we consider the unscaled switching function and its Fourier transform given by 
\begin{subequations}
    \begin{align}
        \chi(\tau) &= 
        N_{n,m} \frac{\sqrt{S}}{(m-1)! \, 2^n}\sum_{k=0}^n (-1)^k \binom{n}{k}\Bigl( S\tau+n-2k \Bigr)^{m-1}\Theta\Bigl( S\tau+n-2k \Bigr), \label{eq: example 2 adiabatic chi}\\
         \widehat{\chi}(\omega) &= 
         N_{n,m} \frac{\ii^{n-m}}{\sqrt{S}} \frac{\sin^n\!\left(\frac{\omega}{S} \right)}{\left(\frac{\omega}{S}\right)^m},\label{eq: chi hat example 2}
    \end{align}
\end{subequations}
where $S$ is again a positive constant with units of energy, $n$ and $m$ are integers with $3\le m < n$, and $N_{n,m}$ is a positive constant, for which a cumbersome formula exists, chosen so that ${\Vert\chi\Vert}^2 = 1$ and ${\Vert\widehat\chi\Vert}^2 = 2\pi$. 
Formula \eqref{eq: example 2 adiabatic chi} can be verified from \eqref{eq: chi hat example 2} by contour integration. 

Elementary considerations show that the 
assumptions 
stated in the first paragraph of Section \ref{subsubsec:adiab-revisited} hold; 
in particular, $\chi\in C^1$ follows despite the theta-functions in \eqref{eq: example 2 adiabatic chi} because $m\ge3$. 
For the SFS condition~\eqref{eq:SFS}, 
we see from \eqref{eq: chi hat example 2} that $\widehat{\chi}(\omega) = O({|\omega|}^{n-m})$ as 
$|\omega|\to0$. The discussion around \eqref{eq:adiab-powerlaw-condition} hence shows that the SFS condition is satisfied provided ${n-m>\frac{\alpha}{2(1-\alpha)}}$, where $\alpha$ is the exponent in the power-law scaling 
\eqref{eq: lambda = (S/E)^p} and $0<\alpha<1$. 

The significant feature of this example for us is that $\chi$ has compact support, in the interval $[-n/S, n/S]$. The lower bound of the support is clear from \eqref{eq: example 2 adiabatic chi} because all the theta-functions vanish for $\tau < -n/S$. The upper bound follows from this because $\chi$ has a definite parity, even for $n-m$ even and odd for $n-m$ odd; this is because $\widehat{\chi}$ \eqref{eq: chi hat example 2} is even for $n-m$ even and odd for $n-m$ odd. The upper bound is also seen by observing that for $\tau > n/S$ all theta-functions in \eqref{eq: example 2 adiabatic chi} reduce to unity and using the identity 
\begin{equation} \label{eq: sum vanishes}
    \sum_{k=0}^n(-1)^k \binom{n}{k}\bigl( S\tau+n-2k \bigr)^{m-1}=0,
\end{equation}
which follows from 0.154.6 in~\cite{Gradshteyn:1943cpj}. 

Plots of $\chi$ \eqref{eq: example 2 adiabatic chi} and $\widehat{\chi}$ \eqref{eq: chi hat example 2} for selected values of $n$ and~$m$
are shown in Figures \ref{fig: chi example 2} and~\ref{fig: chihat example 2}.

\subsection{Example: compact support scaling from cropping adiabatic tails} \label{sec: compact support}

The adiabatic scaled switching function examples constructed in Section \ref{subsec:adiab-compact-support} have compact support, and this compact support arose from the special properties of the generalised sinc functions used in~\eqref{eq: chi hat example 2}. In this section we show that compact support switching functions that lead to a positive small gap temperature are much more general: given an adiabatic scaled switching $\chi_\lambda$ that satisfies the assumptions of Section~\ref{sec: conditions on chi}, leading to a positive small gap temperature, and given a slightly stronger large argument falloff condition on~$\chi_\lambda$, 
a compact support modification that leads to a positive small gap temperature under the power-law $\lambda(E)$ \eqref{eq: lambda = (S/E)^p} 
can be found by an appropriate cropping. 

The construction is given in the following Proposition. 

\begin{proposition} 
\label{prop:compact-prop}
Let $\chi$ be a real-valued function in $C^1(\mathbb{R})$, absolutely integrable and square integrable, and not identically vanishing. In addition, let 
\begin{align}
\intinf \dd\tau \, \chi(\tau) = 0 , 
\label{eq:cropped-zeroint}
\end{align}
and 
\begin{align}
\intinf \dd \tau \, |\tau^3 \chi(\tau)| < \infty 
\,.  
\label{eq:thirdmoment-convergence}
\end{align}

Define the switching function families 
$\{\chi_\lambda \mid \lambda>0\}$
and 
$\{\chi_{\lambda,\delta} \mid \lambda>0\}$ by 
\begin{subequations}
\begin{align}
\chi_\lambda (\tau) &= \chi(\tau/\lambda) , 
\label{eq:phi-def-prep} \\
    \chi_{\lambda,\delta}(\tau) &= \chi(\tau/\lambda) f(\tau/\lambda^\delta),
    \label{eq:phi-def}
\end{align}
\end{subequations}
where $f$ is a real-valued smooth function of compact support with $f(0)=1$, and $\delta>1$ is a constant. 

Then: 
\begin{enumerate}
\item 
$\chi_\lambda$ and 
$\chi_{\lambda,\delta}$ are ASSFs. 
\item 
Suppose $\lambda(E)$ is given by the power-law \eqref{eq: lambda = (S/E)^p} with $0<\alpha<2/3$. 
Then $\chi_\lambda$ satisfies the SFS condition~\eqref{eq:SFS}, thus leading to the positive small gap temperature~\eqref{eq: non-zero temp}, 
and the same holds for $\chi_{\lambda,\delta}$ when $\delta > 3/2$. 
\end{enumerate}  
\end{proposition}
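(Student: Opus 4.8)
The plan splits along the two assertions of the Proposition.

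For the first assertion (that $\chi_\lambda$ and $\chi_{\lambda,\delta}$ are ASSFs), the case $\chi_\lambda(\tau)=\chi(\tau/\lambda)$ is immediate: one has $\widehat{\chi}_\lambda(\omega)=\lambda\widehat{\chi}(\lambda\omega)$, so $\lambda^{-1}\widehat{\chi}_\lambda(u/\lambda)=\widehat{\chi}(u)$ for every $\lambda$, and since $\chi\in L^1\cap L^2$ the function $\widehat{\chi}$ is continuous and lies in $L^2(\mathbb{R},\dd u/(2\pi))$, so Proposition~\ref{prop:chilambda} applies with $\eta=|\widehat{\chi}|$ and returns $\xi=\chi\ne0$. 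This is the adiabatic example of Section~\ref{sec:assf} under weaker regularity. For $\chi_{\lambda,\delta}(\tau)=\chi(\tau/\lambda)f(\tau/\lambda^\delta)$ I would first record that $\chi_{\lambda,\delta}\in C^1_0(\mathbb{R})$ (a $C^1$ function times a smooth compactly supported factor), hence absolutely integrable, and compute
\[
\frac{1}{\lambda}\widehat{\chi}_{\lambda,\delta}\!\left(\frac{u}{\lambda}\right)=\intinf\dd t\,\ee^{-\ii ut}\,\chi(t)\,f(\lambda^{1-\delta}t)=\widehat{g_\lambda}(u),\qquad g_\lambda(t):=\chi(t)\,f(\lambda^{1-\delta}t).
\]
Since $\delta>1$, we have $f(\lambda^{1-\delta}t)\to f(0)=1$ pointwise with $|f(\lambda^{1-\delta}t)|\le\|f\|_\infty$, so dominated convergence gives $\widehat{g_\lambda}(u)\to\widehat{\chi}(u)$ for every $u$; the candidate limit is again $\xi=\chi\ne0$.

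The main obstacle of the whole argument is the $L^2$ domination demanded by Proposition~\ref{prop:chilambda} for $\chi_{\lambda,\delta}$, because this family has no exact Fourier rescaling. The plan is to use the convolution form: multiplication by $f(\cdot/\lambda^\delta)$ convolves the Fourier transform with an approximate identity, and after a change of variables one gets
\[
|\widehat{\chi}_{\lambda,\delta}(\omega)|\le\frac{\lambda}{2\pi}\bigl(|\widehat{\chi}|*K_{\lambda^{\delta-1}}\bigr)(\lambda\omega),\qquad K_\rho(v):=\rho\,|\widehat{f}(\rho v)|.
\]
Since $f$ is smooth with compact support, $|\widehat{f}|$ is dominated by a fixed radially decreasing function in $L^1(\mathbb{R})$, so the standard approximate-identity estimate gives $(|\widehat{\chi}|*K_\rho)(x)\le c\,(M|\widehat{\chi}|)(x)$ uniformly in $\rho>0$, where $M$ is the Hardy--Littlewood maximal operator. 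Hence $|\widehat{\chi}_{\lambda,\delta}(\omega)|\le\lambda\,\eta(\lambda\omega)$ with $\eta=\tfrac{c}{2\pi}M|\widehat{\chi}|$, and $\eta\in L^2$ because $\widehat{\chi}\in L^2$ (Plancherel, from $\chi\in L^2$) and $M$ is bounded on $L^2$. Thus the hypotheses of Proposition~\ref{prop:chilambda} hold, $\chi_{\lambda,\delta}$ is an ASSF, and $\xi=\chi$ in both cases.

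For the second assertion (the SFS condition), set $R=R(E):=\lambda(E)|E|=S^\alpha|E|^{1-\alpha}$ under the power law \eqref{eq: lambda = (S/E)^p}; this tends to $0$ as $E\to0$ when $0<\alpha<1$. The substitution $u=\lambda\omega$ turns the left-hand side of \eqref{eq:SFS} into $\int_{-R}^{R}|\widehat{\chi}(u)|^2\,\dd u$ for $\chi_\lambda$ and into $\int_{-R}^{R}|\widehat{g_\lambda}(u)|^2\,\dd u$ (with $\lambda=\lambda(E)$) for $\chi_{\lambda,\delta}$. For $\chi_\lambda$: \eqref{eq:cropped-zeroint} gives $\widehat{\chi}(0)=0$, and \eqref{eq:thirdmoment-convergence} together with $\chi\in L^1$ gives $\intinf\dd\tau\,|\tau\chi(\tau)|<\infty$, so $\widehat{\chi}$ is differentiable at $0$ and $\widehat{\chi}(u)=O(|u|)$ there; hence $\int_{-R}^{R}|\widehat{\chi}|^2=O(R^3)=O\bigl(|E|^{3(1-\alpha)}\bigr)$, which is $o(|E|)$ exactly when $\alpha<2/3$. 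For $\chi_{\lambda,\delta}$: I would split $|\widehat{g_\lambda}(u)|^2\le 2|\widehat{g_\lambda}(0)|^2+2|\widehat{g_\lambda}(u)-\widehat{g_\lambda}(0)|^2$ and estimate the pieces using $\widehat{\chi}(0)=0$, $f(0)=1$, $|f(s)-1|\le\|f'\|_\infty|s|$ and $\intinf\dd\tau\,|\tau\chi(\tau)|<\infty$: one gets $\widehat{g_\lambda}(0)=\int\chi(t)\bigl(f(\lambda^{1-\delta}t)-1\bigr)\dd t=O\bigl(\lambda^{1-\delta}\bigr)$, while $|\widehat{g_\lambda}(u)-\widehat{g_\lambda}(0)|\le|u|\,\|f\|_\infty\intinf\dd\tau\,|\tau\chi(\tau)|=O(|u|)$ uniformly in $\lambda$. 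Integrating, $\int_{-R}^{R}|\widehat{g_\lambda}|^2=O\bigl(R\,\lambda^{2(1-\delta)}\bigr)+O(R^3)$; inserting $R=S^\alpha|E|^{1-\alpha}$ and $\lambda=\lambda(E)=S^\alpha|E|^{-\alpha}$, the $R^3$ term is $o(|E|)$ for $\alpha<2/3$, while the first term carries exponent $1-3\alpha+2\alpha\delta$, which exceeds $1$ precisely when $\delta>3/2$. In both cases \eqref{eq:SFS} holds, and the analysis of Section~\ref{sec: conditions on chi} then delivers the positive small-gap temperature \eqref{eq: non-zero temp}.

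I expect the scale bookkeeping in the second part (tracking the three small parameters $|E|$, $R=\lambda|E|$ and $\lambda^{1-\delta}$, with the moment hypotheses used only to license the Taylor-type expansions near the origin) to be routine; the genuinely delicate step is the uniform-in-$\lambda$ $L^2$ domination for $\chi_{\lambda,\delta}$ in the first part, for which the Hardy--Littlewood maximal bound for approximate identities appears to be the cleanest available tool.
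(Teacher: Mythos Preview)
Your proof is correct. Part~1 (both families are ASSFs) and the $\chi_\lambda$ half of Part~2 match the paper's argument essentially line for line: the paper also invokes the Hardy--Littlewood maximal bound for approximate identities (citing Stein, Theorems~I.1(c) and III.2(a)) to produce the $L^2$ dominator for $\chi_{\lambda,\delta}$, and handles SFS for $\chi_\lambda$ via $\widehat{\chi}(\omega)=O(|\omega|)$ exactly as you do.

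Where you diverge is the SFS verification for $\chi_{\lambda,\delta}$. The paper works in real space: it writes $|\widehat{\chi}_{\lambda,\delta}(\omega)|^2$ as a double integral over $\chi(x)\chi(y)f(\epsilon^b x)f(\epsilon^b y)\cos(\omega\lambda(x-y))$, integrates in $\omega$ to produce a $\sinc$ kernel, then Taylor-expands both $\sinc(u)-1$ and $f(u)-1$ and tracks the resulting moment combinations. This uses the full hypothesis $\int|\tau^3\chi|<\infty$ (the cross terms in the expansion throw up third absolute moments) and yields a precise two-term asymptotic with explicit coefficients $M_1[\chi]^2$ and $(f'(0))^2$, from which the thresholds $\alpha<2/3$, $\delta>3/2$ are read off as \emph{sharp} (when those coefficients are nonzero). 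Your Fourier-side splitting $|\widehat{g_\lambda}(u)|^2\le 2|\widehat{g_\lambda}(0)|^2+2|\widehat{g_\lambda}(u)-\widehat{g_\lambda}(0)|^2$ is more elementary and in fact uses only $\int|\tau\chi|<\infty$ rather than the third-moment bound; it delivers the same exponents $R^3$ and $R\lambda^{2(1-\delta)}$ with less work. The trade-off is that you obtain only upper bounds, so sharpness is not visible, and the paper's real-space expansion is what drives its subsequent variants (Section~\ref{app: compactsupportaux-variants}), where imposing evenness on $\chi$ and $f$ kills the leading coefficients and enlarges the admissible $(\alpha,\delta)$ range --- a refinement your estimate does not immediately accommodate.
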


\begin{proof}
Given in Appendix~\ref{app: compactsupportaux}. 
\end{proof}

In Proposition \ref{prop:compact-prop}, $\chi_\lambda$ \eqref{eq:phi-def-prep} is an adiabatic scaled family, satisfying the conditions described in Section~\ref{subsubsec:adiab-revisited}. 
For $\chi_\lambda$, the small gap temperature results in the proposition just restate the discussion around \eqref{eq:adiab-powerlaw-condition} for $q=1$. Note that $\chi_\lambda$ \eqref{eq:phi-def-prep} need not have compact support. 

The new part of Proposition \ref{prop:compact-prop} concerns the compact support switching family $\chi_{\lambda,\delta}$~\eqref{eq:phi-def}. 
This family is obtained from $\chi_\lambda$ by cropping off the tails, and because $\delta>1$, the cropping recedes to infinity as $\lambda\to\infty$ faster than the profile of $\chi_\lambda$ spreads: for any fixed~$\tau$, we have $\chi_{\lambda,\delta}(\lambda\tau) = \chi(\tau) f(\lambda^{1-\delta}\tau)\to \chi(\tau)$ as~$\lambda\to0$. 
As $\chi$ changes sign somewhere, this implies that $\chi_{\lambda,\delta}$ changes sign somewhere for all sufficiently large~$\lambda$. 
Proposition \ref{prop:compact-prop} shows that the cropping retains the small gap temperature results when $\delta > 3/2$. 

Stronger assumptions on the large argument falloff of $\chi$ 
and the small argument variation of $f$ give 
variants of Proposition \ref{prop:compact-prop}
with larger ranges of the parameters $\alpha$ and~$\delta$. 
We give an example in Section \ref{app: compactsupportaux-variants} of Appendix~\ref{app: compactsupportaux}.

\section{Conclusions} \label{sec:conclusions}

In this paper, we addressed a puzzle concerning the circular motion Unruh effect for a pointlike Unruh-DeWitt detector interacting with a massless scalar quantum field in $(2+1)$-dimensional Minkowski spacetime. Although the circular motion and linear acceleration Unruh effects are similar over much of the parameter space, the two have a significant discrepancy:
in the limit of long interaction time, 
the effective temperature associated with circular motion, defined operationally via the ratio of the detector's excitation and de-excitation rates, 
vanishes linearly in the gap as the detector's energy gap approaches zero~\cite{Biermann:2020bjh}. 
This property renders the effective temperature an unreliable quantifier of the circular motion Unruh effect at small detector gaps. 
A~practical consequence is that 
proposals to verify the circular motion Unruh effect in analogue spacetime experiments \cite{Retzker:2007vql,Gooding:2020scc,Biermann:2020bjh,Unruh:2022gso,Bunney:2023ude,Gooding:2025tfp} 
would need to operate outside the small gap regime. 

The purpose of this paper was to provide access to the small gap regime in the circular motion Unruh effect in $2+1$ dimensions, by reconsidering how the long interaction time limit is taken. 

As a first step, we introduced a new implementation of the long interaction time limit, the Asymptotically Scaled Switching Family (ASSF) of Definition~\ref{def:assf}, which 
encompasses as special cases both the adiabatic and plateau scalings of \cite{Fewster:2016ewy} but is significantly more general, controlling in a precise way the asymptotics of both the switching function and its Fourier transform. 

We then showed that within the ASSFs, under modest technical uniformity conditions, the Unruh effect remains inaccessible in the small-gap and long-time limit when the two limits are taken in succession in either order, assuming that the coupling of the the local UDW detector to the field does not change sign during the interaction. 

We next turned to the small-gap and long-time limit taken simultaneously, under an inverse power law relation between the two. Within the adiabatic and plateau scalings of~\cite{Fewster:2016ewy}, 
and continuing to assume that the coupling of the the local UDW detector to the field does not change sign during the interaction, we showed that the effective temperature still tends to zero in the small gap limit, although the rate at which the temperature approaches zero can be slowed by taking the interaction duration to increase relatively slowly when the detector gap decreases.

We then observed that recovering a positive small gap temperature is possible only by balancing two conditions on the Fourier transform of the switching function. 
First, the long time limit implies that the squared modulus of the Fourier transform must become increasingly peaked near small values of the argument, but there does not necessarily need to be a peak at the value zero. Second, and pivotally, recovering a finite small gap temperature requires the Fourier transform to vanish sufficiently fast near zero argument, \emph{while simultaneously preserving the sharp peaking of its squared modulus at small arguments implied by the long-time limit\/}. We showed that this cannot be achieved in plateau scaling, but it can be achieved in adiabatic scaling by taking the Fourier transform of the switching function to be strictly vanishing at zero argument. The time average of the switching function then vanishes, so that the switching function must change sign at least once. We gave examples of such adiabatic scaled switching functions with compact support, power-law decay and Gaussian decay. Finally, we presented a construction of sign-changing, compactly supported switching functions that achieve this balance through a scaling that, while not strictly adiabatic, is asymptotically adiabatic at long interaction times. 

Thus, we recover a positive temperature in the small gap and long time limit by arranging the switching function to change sign. However, crucially, the small gap temperature obtained \emph{does not depend on the details of the switching:\/} it depends on the parameters of the motion, in a manner that is consistent with the linear motion Unruh effect prediction, up to a velocity-dependent factor that we identified.
Therefore, the small gap temperature obtained is a property inherent to the system, revealed by our construction of suitably scaled switchings.

In the simultaneous limit of small gap and long interaction time, most of our examples assumed that the gap and the interaction time are related by an inverse power law, and the examples recovering a finite small gap temperature had power-laws where the interaction time goes to infinity more slowly than the inverse gap. 
We discussed in 
Section \ref{subsubsec:adiab-revisited} and Appendix \ref{app: growth of lambda} an adiabatic scaling example where the interaction profile is stretched more slowly than any inverse power of the gap, but we also noted that in such examples the interaction profile itself is likely to have a weak falloff at early and late times, and hence to be of less interest for analogue spacetime experiments. We leave further discussion of non-power-law scalings to future work.

Switching functions that change sign appear not to have featured prominently in the literature on Unruh-DeWitt detectors. 
We note however that such switchings arise naturally in relativistic entanglement-harvesting protocols, where two causally disconnected local laser pulses become entangled with each other through their individual interaction with an electromagnetic field or a Bose-Einstein condensate~\cite{Lindel:2023rfi,Gooding:2023xxl}. 
In this setting, the switching function is given by the coherent amplitude of each laser, which typically changes sign. This underscores the practical relevance and potential experimental feasibility of sign-changing switching functions.

We considered a massless scalar field in $(2+1)$-dimensional Minkowski spacetime because this field appears in the experimental proposals that employ a planar Bose-Einstein condensate or a superfluid Helium surface \cite{Retzker:2007vql,Gooding:2020scc,Biermann:2020bjh,Unruh:2022gso,Bunney:2023ude,Gooding:2025tfp}. 
A massless fermion field in $(2+1)$-dimensional Minkowski spacetime can be simulated in an optical lattice, and this system has been proposed as an analogue spacetime simulation of the linear motion Unruh effect \cite{Rodriguez_Laguna_2017,Kosior_2018}. Were it possible to use the optical lattice for a simulation of the circular motion Unruh effect, such that the signal extraction can be modelled by an UDW detector coupled linearly to the scalar density of the spinor field, the detector analysis would be as for a massless scalar field in $5+1$ dimensions~\cite{Louko:2016ptn}, and there would be no suppression of the signal in the small gap and long time regime~\cite{Parry:2024jrm}. 
A~Maxwell field in $(2+1)$-dimensional Minkowski spacetime is not known to us to arise in an analogue spacetime simulation, but were such a system found, and proposed for simulating the circular motion Unruh effect, such that the signal extraction is modelled by an UDW detector with a dipole interaction, the linear acceleration results in Section 9.3 of \cite{Takagi:1986kn} suggest that the outcomes would be similar to a massless scalar field in $4+1$ dimensions, and hence display no suppression of the signal in the small gap and long time regime~\cite{Parry:2024jrm}.

In conclusion, we have clarified the reasons  behind the vanishing effective Unruh temperature for circular acceleration in the long-time-small-gap limit in $2+1$ spacetime dimensions, highlighting the role that the uniform sign of the switching function plays in this phenomenon. By allowing the switching function to change sign, and carefully controlling the interaction duration, we have shown that it is possible to recover a non-zero effective temperature in the long-time-small-gap limit, with the interaction duration and gap related by an inverse power law. This opens avenues for future theoretical investigations and experimental realisations, suggesting the potential for broader control of detector response and quantum field interactions through tailored switching protocols. Ultimately, our results deepen the conceptual understanding of acceleration-induced quantum effects and offer promising directions for observing analogue spacetime relativistic phenomena in condensed matter experiments.

\section*{Acknowledgements}
We thank Benito Ju\'arez-Aubry, Cameron Bunney, Cisco Gooding, Robert Mann, Rick Perche, Bill Unruh and all the members of the Nottingham Gravity Laboratory for stimulating discussions throughout the course of this work. 
We thank the anonymous referees for helpful comments.
DV-C thanks CONACyT and the WW Smith Trust for financial support. 
The work of CJF was supported by the Engineering and Physical Sciences Research Council [grant number EP/Y000099/1]. 
The work of JL was supported by United Kingdom Research and Innovation Science and Technology Facilities Council [grant numbers ST/S002227/1, ST/T006900/1 and ST/Y004523/1].
The authors have benefited from the activities of COST Action CA23115:
Relativistic Quantum Information, funded  by COST (European Cooperation in Science and Technology). 
For the purpose of open access, the authors have applied a CC BY public copyright licence to any Author Accepted Manuscript version arising.

\appendix

\section{Small $E$ expansion with adiabatic scaling} \label{app: adiabatic scaling RF expansion}

In this appendix we find the small $E$ expansion of the third term in the adiabatic scaled response function~\eqref{eq: scaled RF adiabatic fn of E}. 

We write the third term in \eqref{eq: scaled RF adiabatic fn of E} as 
$-\frac{\sgn(E)}{8\pi\gamma}\mathcal{A}(E)$ where
\begin{equation} 
\label{eq: A(E) adiabatic}
    \mathcal{A}(E) = \int_{-\eta(|E|)}^{\eta(|E|)}\dd \omega \, \FTChiSq , 
\end{equation}
and we recall from Section \ref{sec: adiabatic scaling} that 
\begin{align}
\label{eq:etadisp-adiab}
\eta(|E|)  = S {\left(\frac{|E|}{S}\right)}^{\!1 - \alpha} . 
\end{align}
Note that $\mathcal{A}(E)$ is even in~$E$. 
We consider the cases $\alpha>1$, $\alpha=1$ and $\alpha<1$ in turn.

\subsection{$\alpha>1$} 

Suppose $\alpha>1$. 
Using the evenness of~$\FTChiSq$, we rewrite \eqref{eq: A(E) adiabatic} as
\begin{equation}
\label{eq:A-adiab-int1}
    \mathcal{A}(E) = \intinf\dd \omega \, \FTChiSq - 2\int_{\eta(|E|)}^\infty\dd \omega \, \FTChiSq.
\end{equation}
The first term in \eqref{eq:A-adiab-int1} equals $2\pi {\Vert \chi \Vert}^2$ by Plancherel's theorem. In the second term, we recall that 
$\widehat{\chi}(\omega)=o(\omega^{-1})$ as $\omega \to \infty$, since $\chi \in C_0^1$, 
and we see from \eqref{eq:etadisp-adiab} that $\eta(|E|)$ tends to infinity proportionally to ${|E|}^{-(\alpha-1)}$ as $E\to0$. As $E\to0$, we hence have  
\begin{equation}
    \mathcal{A}(E) = 2\pi {\Vert \chi \Vert}^2 + o\bigl(|E|^{\alpha-1}\bigr),
\end{equation}
where writing the error term as a function of $|E|$ denotes
that the error term is even in~$E$. 

The third term in \eqref{eq: scaled RF adiabatic fn of E} is hence 
\begin{align}
- \frac{{\Vert \chi \Vert}^2}{4\gamma} \sgn(E) 
\! \left(
1 + 
o\bigl(|E|^{\alpha-1}\bigr)
\right) . 
\end{align}

\subsection{$\alpha=1$}

Suppose $\alpha=1$. 
From \eqref{eq:etadisp-adiab} we now have $\eta(|E|) = S$, which is independent of~$E$, and \eqref{eq: A(E) adiabatic} becomes
\begin{equation}
    \mathcal{A}(E) = \int_{-S}^{S}\dd\omega \,  \FTChiSq .
\end{equation}
The third term in \eqref{eq: scaled RF adiabatic fn of E} is hence 
\begin{align}
- \frac{1}{8\pi\gamma} \sgn(E) 
\int_{-S}^{S}\dd\omega \,  \FTChiSq .
\end{align}

\subsection{$\alpha<1$} 

Suppose finally that $\alpha<1$. From \eqref{eq:etadisp-adiab} we now have $\eta(|E|) \to0$ as $E \to 0$. Hence \eqref{eq: A(E) adiabatic} gives 
\begin{equation}
     \mathcal{A}(E) = 2{|\widehat{\chi}(0)|}^2 S^\alpha |E|^{1-\alpha} 
     \! \left(1 + O\bigl(|E|^{2(1-\alpha)}\bigr)
     \right) ,
\end{equation}
where we recall that $\widehat{\chi}(0)>0$ by the assumptions on~$\chi$, and the error term comes using smoothness and evenness of~$\FTChiSq$. 

The third term in \eqref{eq: scaled RF adiabatic fn of E} is hence 
\begin{align}
- \frac{{|\widehat{\chi}(0)|}^2}{4\pi\gamma} 
S^\alpha \sgn(E) |E|^{1-\alpha} 
\! \left(1 + O\bigl(|E|^{2(1-\alpha)}\bigr)
     \right) . 
\end{align}

\section{Small $E$ expansion with plateau scaling}\label{app: [plateau] scaling RF expansion}

In this appendix we find the small $E$ expansion of the third term in the plateau scaled response function \eqref{eq: scaled RF plateau fn of E}.

We write the integral in the third term in \eqref{eq: scaled RF plateau fn of E} as 
\begin{equation} \label{eq: B(E)}
    \mathcal{B}(E):=\int_{-\eta(|E|)}^{\eta(|E|)}\dd u \, \frac{1-\cos u}{u^2}\left|\widehat{\psi}\left(\frac{|E|}{\eta(|E|)}u\right)\right|^2 ,  
\end{equation}
where we recall from Section \ref{sec: plateau} that 
\begin{align}
\label{eq:etadisp-plateau}
\eta(|E|) = |E| \! \left( \tau_s + \tau_p {\left(\frac{S}{|E|}\right)}^{\!\alpha} \,
\right) .  
\end{align}
From \eqref{eq:etadisp-plateau} it follows that 
    \begin{align}
        \frac{|E|}{\eta(|E|)} &= 
        \frac{1}{\tau_s}\frac{c_2{|E|}^\alpha}{1+c_2{|E|}^\alpha} , 
        \label{eq: E/eta}
    \end{align}
where $c_2 = \tau_s/(\tau_pS^\alpha)$. 
Note that $|E|/\eta(|E|) \to 0$ proportionally to ${|E|}^\alpha$ as $E\to0$. 

Note that $\mathcal{B}$ is even. 
We consider the small $E$ behaviour of $\mathcal{B}(E)$ separately for $\alpha>1$, $\alpha=1$ and $0<\alpha<1$.

\subsection{$\alpha>1$}
Suppose $\alpha>1$. Then $\eta(|E|)\to \infty$ as $E\to 0$, and from $\eqref{eq: E/eta}$ we have $|E|/\eta(|E|) \to0$ as $E\to 0$. Using the evenness of $\bigl|\widehat\psi\bigr|^2$, we may write \eqref{eq: B(E)} as 
\begin{equation}
    \mathcal{B}(E) = 2\int_0^{\infty} \dd u \, \frac{1-\cos u}{u^2}\left|\widehat{\psi}\left(\frac{|E|}{\eta(|E|)}u\right)\right|^2 - 2\int_{\eta(|E|)}^\infty\dd u \,\frac{1-\cos u}{u^2}\left|\widehat{\psi}\left(\frac{|E|}{\eta(|E|)}u\right)\right|^2 . 
\label{eq:B(E)-alphalarge}
\end{equation}
As $E\to0$, \eqref{eq:B(E)-alphalarge} gives 
\begin{equation} 
\label{eq: B(E) case I expansion}
    \mathcal{B}(E) = \pi|\widehat{\psi}(0)|^2 + o_e(1) , 
\end{equation}
using in the first term a dominated convergence argument and the standard integral $\int_0^{\infty} \dd u \, \frac{1-\cos u}{u^2} = \tfrac12\pi$, 
and using the second term the boundedness of $\widehat\psi$ and the integrability of 
$(1-\cos u)/u^2$.

Hence, the third term in \eqref{eq: scaled RF plateau fn of E} is
\begin{equation}
- \frac{\tau_p}{4\gamma}|\widehat{\psi}(0)|^2 \sgn(E) \bigl( 1 + o_e(1)\bigr) . 
\end{equation}

\subsection{$\alpha=1$}

Suppose $\alpha=1$. Then $\eta(|E|) \to \tau_p S$ and $|E|/\eta(|E|) \to 0$
as $E\to 0$, and from \eqref{eq: B(E)} we have 
\begin{equation}
\mathcal{B}(E) = \pi |\widehat{\psi}(0)|^2  \zeta + o_e(1), 
\end{equation}
where 
\begin{equation}
    \zeta = \frac{1}{\pi}\int_{-\tau_p S}^{\tau_p S} \dd u \, \frac{1-\cos u}{u^2} . 
\end{equation}
Hence, the third term in \eqref{eq: scaled RF plateau fn of E} is
\begin{equation}
- \frac{\tau_p \zeta}{4\gamma}|\widehat{\psi}(0)|^2 \sgn(E) \bigl(1 + o_e(1) \bigr) . 
\end{equation}

\subsection{$\alpha<1$} \label{sec: plat case III}
Finally, suppose $\alpha<1$. Then 
$\eta(|E|)\to 0$ as $E\to 0$. 

We write \eqref{eq: B(E)} as 
\begin{equation}
    \mathcal{B}(E) = \eta(|E|)\int_{-1}^1 \dd v \, \frac{1-\cos\big(\eta(|E|)v\big)}{\eta^2(|E|)v^2}|\widehat{\psi}(|E|v)|^2, 
\end{equation}
by the change of variables $u= \eta(|E|)v$. 
As $E\to0$, a dominated convergence argument allows the limit to be taken under the integral, with the result 
\begin{align}
\mathcal{B}(E) &= \eta(|E|) |\widehat{\psi}(0)|^2\bigl(1 + o_e(1) \bigr) .  
\end{align}
Hence, the third term in \eqref{eq: scaled RF plateau fn of E} is
\begin{equation}
- \frac{\tau_p (\tau_p S)}{4\pi\gamma} 
|\widehat{\psi}(0)|^2 \left(\frac{|E|}{S}\right)^{1-\alpha} \sgn(E) 
\bigl(1 + o_e(1) \bigr) . 
\end{equation}

\section{Adiabatic scaling with weak small-frequency suppression of  $|\widehat{\chi_\lambda}|$} 
\label{app: growth of lambda}

In this appendix we address the SFS condition \eqref{eq:SFS} for adiabatic scaled switching families, $\chi_\lambda = \chi(\tau/\lambda)$, under the assumptions of Section~\ref{subsubsec:adiab-revisited}, but in the special case where $|\widehat{\chi}(\omega)|\to0$ as $\omega\to0$ more slowly than any positive power of~$|\omega|$. We first show that in this case the SFS condition does not hold for any inverse power-law growth of $\lambda(E)$ as $E\to0$. 
We then give an example in which $|\widehat{\chi}(\omega)|$ is asymptotic to $1/\sqrt{-\log|\omega|}$ as $\omega\to0$, 
and SFS holds when $\lambda(E)\to\infty$ as $E\to0$ so that $\lambda(E) = o(\log|E|)$. 

We formalise the first statement as the following Proposition. 

\begin{proposition} \label{prop: adiab SFS}
Let $\chi_\lambda(\tau) = \chi(\tau/\lambda)$ be an adiabatic scaled switching function family such that $\chi$ is in $C^1(\mathbb{R})$, real-valued, 
absolutely integrable and square integrable, and not identically vanishing. 
Suppose that for every $p>0$, $|u|^{-p}|\widehat{\chi}(u)|^2\to \infty$ as $u \to 0$. 

For $E\ne0$, let $\lambda(E)$ be positive-valued, even in $E$, and such that $\lambda(E)\to\infty$ as $E\to0$. 

Then, if $\chi_{\lambda(E)}(u)$ satisfies the SFS condition~\eqref{eq:SFS}, we have $\lambda(E) = o(|E|^{-\beta})$ as $E\to 0$ for all $\beta>0$.

\end{proposition}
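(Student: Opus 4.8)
The plan is to rewrite the SFS condition~\eqref{eq:SFS} as a statement about the profile $\widehat{\chi}$ alone, and then to show that it forces $\lambda(E)$ to grow more slowly than any negative power of~$|E|$. For the adiabatic scaling one has $\widehat{\chi}_\lambda(\omega)=\lambda\widehat{\chi}(\lambda\omega)$, so substituting $u=\lambda(E)\omega$ turns the left-hand side of~\eqref{eq:SFS} into $G\bigl(\lambda(E)|E|\bigr)$, where
\begin{equation*}
    G(\mu):=\int_{-\mu}^{\mu}\dd u\,{|\widehat{\chi}(u)|}^2,\qquad \mu\ge0.
\end{equation*}
The function $G$ is non-decreasing and, by Plancherel's theorem, bounded above by $2\pi{\Vert\chi\Vert}^2$; moreover the hypothesis supplies, for every $p>0$, a radius $\rho_p>0$ with ${|\widehat{\chi}(u)|}^2\ge{|u|}^p$ for $|u|\le\rho_p$, and hence $G(\mu)\ge\tfrac{2}{p+1}\,\mu^{p+1}$ for $0\le\mu\le\rho_p$. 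In this notation, SFS reads $G\bigl(\lambda(E)|E|\bigr)=o(|E|)$ as $E\to0$.

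Next I would observe that it suffices to prove $\lambda(E){|E|}^{\beta}\to0$ for every $\beta\in(0,1)$: for $\beta\ge1$ and $|E|\le1$ one has $\lambda(E){|E|}^{\beta}\le\lambda(E){|E|}^{1/2}$, which goes to $0$ by the case $\beta=1/2$, while $\lambda(E){|E|}^{\beta}\ge0$ since $\lambda$ is positive-valued. Fixing $\beta_0\in(0,1)$, I would argue by contradiction. If $\lambda(E){|E|}^{\beta_0}$ does not tend to $0$, then, using that $\lambda(E){|E|}^{\beta_0}$ is even in $E$, there are $\varepsilon>0$ and $E_n\to0^{+}$ with $\lambda(E_n)E_n^{\beta_0}\ge\varepsilon$, so that
\begin{equation*}
    \mu_n:=\lambda(E_n)E_n\ \ge\ \varepsilon\,E_n^{\,1-\beta_0}\ =:\ \delta_n,\qquad \delta_n\to0,
\end{equation*}
the convergence $\delta_n\to0$ holding because $1-\beta_0>0$. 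The decisive computation is then to pick $p$ with $0<p<\dfrac{\beta_0}{1-\beta_0}$ — possible because $\beta_0>0$ — and to combine monotonicity of $G$ with the polynomial lower bound above: for $n$ large enough that $\delta_n\le\rho_p$,
\begin{equation*}
    \frac{G(\mu_n)}{E_n}\ \ge\ \frac{G(\delta_n)}{E_n}\ \ge\ \frac{2}{p+1}\,\frac{\delta_n^{\,p+1}}{E_n}\ =\ \frac{2\varepsilon^{p+1}}{p+1}\,E_n^{\,(1-\beta_0)(p+1)-1}.
\end{equation*}
The exponent $(1-\beta_0)(p+1)-1$ is strictly negative precisely by the choice $p<\beta_0/(1-\beta_0)$, so the right-hand side diverges as $n\to\infty$, contradicting $G\bigl(\lambda(E)|E|\bigr)=o(|E|)$. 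Hence $\lambda(E){|E|}^{\beta_0}\to0$, and by the reduction this yields $\lambda(E)=o\bigl({|E|}^{-\beta}\bigr)$ for all $\beta>0$.

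I expect the only genuinely delicate point to be the exponent bookkeeping: pinning down the admissible range $0<p<\beta_0/(1-\beta_0)$ so that $(1-\beta_0)(p+1)-1<0$, checking $\delta_n\to0$ and $\mu_n\ge\delta_n$, and noting that it is enough to treat $\beta\in(0,1)$. The remaining ingredients — the change of variables $u=\lambda(E)\omega$, Plancherel for the upper bound on $G$, and the extraction of the local estimate ${|\widehat{\chi}(u)|}^2\ge{|u|}^p$ from the hypothesis — are routine.
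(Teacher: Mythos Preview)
Your proof is correct and takes essentially the same approach as the paper: rewrite SFS as $G(\lambda(E)|E|)=o(|E|)$ with $G(\mu)=\int_{-\mu}^{\mu}|\widehat{\chi}|^2$, and use the polynomial lower bound $|\widehat{\chi}(u)|^2\ge|u|^p$ near zero to force $\lambda(E)|E|^{p/(p+1)}\to0$ for every $p>0$. The only cosmetic difference is that the paper first establishes $\eta(E)=\lambda(E)|E|\to0$ (handling $\beta\ge1$) and then applies the lower bound directly to $\eta(E)$, whereas you bypass that preliminary step by arguing by contradiction and using monotonicity of $G$ to compare $G(\mu_n)$ with $G(\delta_n)$ for the auxiliary sequence $\delta_n\to0$.
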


\begin{proof}
Proceeding as in~\eqref{eq: adiab condition sfs}, the SFS condition \eqref{eq:SFS} takes the form  
\begin{equation}
\frac{1}{|E|}\int_{-\eta(E)}^{\eta(E)}\dd u \,{|\widehat{\chi}(u)|}^2 
\xrightarrow[E\to0]{}0,
\label{eq: adiab SFS app}
\end{equation}
where $\eta(E):=\lambda(E)|E|$. Let $p>0$. By the assumptions about $\widehat\chi$, 
there exists a $\delta>0$ and an $M>0$ such that $|\widehat{\chi}(u)|^2 \geq M |u|^p$ when $|u| \leq \delta$. From \eqref{eq: adiab SFS app} it then follows that $\eta(E)\to0$ as $E\to0$. This implies $\lambda(E) = o(|E|^{-\beta})$ as $E\to 0$ for all $\beta\ge1$.

Now, when $|E|$ is so small that $\eta(E) < \delta$, we compute
\begin{align} \label{eq: adiab SFS 2 app}
        \frac{1}{|E|}\int_{-\eta(E)}^{\eta(E)} \dd u \, |\widehat{\chi}(u)|^2 \geq \frac{2M}{|E|}\int_0^{\eta(E)} \dd u \, u^p = \frac{2M}{p+1}\frac{{\eta(E)}^{p+1}}{|E|} =  \frac{2M}{p+1}\lambda(E)^{p+1}|E|^p.
\end{align}
Since \eqref{eq: adiab SFS app} holds, the rightmost expression in \eqref{eq: adiab SFS 2 app} vanishes as $E\to 0$. Hence $\lambda(E)^{p+1}|E|^p\to 0$ as $E\to 0$. 
Since $p>0$ was arbitrary, this implies that  $\lambda(E)|E|^\beta \to 0$ as $E\to 0$ for all $0<\beta<1$. This gives $\lambda(E) = o(|E|^{-\beta})$ as $E\to 0$ for all $0<\beta<1$.

Combining, we have $\lambda(E) = o(|E|^{-\beta})$ for all $\beta>0$.
\end{proof}

As an example, suppose that 
\begin{equation}
    |\widehat{\chi}(u)|^2 = 
    \frac{1}{\log\!\left(S/|u| \right)}
    \bigl(1 + o(1)\bigr)
\label{eq:ex-loginv-as}
\end{equation}
as $u\to 0$, where 
$S$ is a positive constant of the dimension of energy, introduced to make the argument of the logarithm dimensionless. 
For $\lambda(E)$ such that $\eta(E)\to0$ as $E\to0$, we then have 
\begin{align}
\frac{1}{|E|}\int_{-\eta(E)}^{\eta(E)}\dd u \,{|\widehat{\chi}(u)|}^2 
& = 
\frac{2S}{|E|} 
{\textstyle{
E_1 \! \left( \log \bigl(\frac{S}{\eta(E)}\bigr) \right) }}
\bigl(1 + o(1)\bigr)
\notag\\
& = 
\frac{2S}{|E|} 
\! \left( 
\frac{\frac{\eta(E)}{S}}{\log\bigl(\frac{S}{\eta(E)}\bigr)} 
\right)
\bigl(1 + o(1)\bigr)
\notag\\[1ex]
& = 
\frac{2\lambda(E)}{\log(S/|E|) - \log\bigl(\lambda(E)\bigr)}
\bigl(1 + o(1)\bigr) , 
\label{eq:ex-loginv-ascalc}
\end{align}
where in the first equality we have estimated the integral by the exponential integral function $E_1$, and in the second equality we have used the large argument expansion 6.12.1 in~\cite{NIST}. SFS is hence satisfied if $\lambda(E)\to\infty$ as $E\to0$ so slowly that 
$\lambda(E) = o \bigl(\log(S/|E|)\bigr)$. 

This example also illustrates that Proposition 
\ref{prop: adiab SFS} does not have a straightforward converse: under the conditions of the proposition, a slower-than-powerlaw growth of $\lambda(E)$ is necessary for SFS, but it is not sufficient. For example, suppose that $\lambda(E) \sim \bigl(\log(S/|E|)\bigr)^2$ as $E\to0$. 
The growth of $\lambda(E)$ is then slower than any inverse power, but SFS is not satisfied in this example because the last expression in \eqref{eq:ex-loginv-ascalc} tends to infinity as $E\to0$.

\section{Asymptotics of $I(v)$ \eqref{eq: I(v)}} \label{app: I(v) asymptotics}

In this appendix, we find the $v\to 0$ and $v\to 1$ asymptotics of the function $I(v)$ \eqref{eq: I(v)} given by
\begin{equation} \label{eq: I(v) app}
    I(v) = \frac{4v}{\pi^2}\int_0^\infty \dd z \left( \frac{1}{\sqrt{1-v^2\sinc^2 \! z}}-1 \right) . 
\end{equation}

The integrand in \eqref{eq: I(v) app} is positive and pointwise increasing in~$v$. It follows that $I(v)$ is positive and increasing in~$v$. 

Consider first the $v\to 0$ limit. Expanding the integrand in \eqref{eq: I(v) app} in powers of~$v$, justified by dominated convergence, 
we find
\begin{align}
    I(v) = \frac{2v^3}{\pi^2}\int_0^\infty \dd z \, \sinc^2 \! z 
    + O(v^5) 
    = \frac{1}{\pi}v^3
    + O(v^5) . 
\end{align}

Consider then the $v\to 1$ limit. Recalling that $\gamma = 1/\sqrt{1-v^2}$, we write 
\begin{subequations}
    \begin{align}
        I(v) &= \frac{4v}{\pi^2} 
        J \! \left(\frac{1}{\sqrt{1-v^2}}\right), 
        \label{eq:I-ito-J}\\
        J(\gamma) &= \int_0^\infty\dd z \, \left( \frac{1}{\sqrt{1-(1-\gamma^{-2})\sinc^2 \! z}} -1\right). \label{eq: J(gamma)}
    \end{align}
\end{subequations}
$v\to 1$ then corresponds to $\gamma \to \infty$.

We split the integral in \eqref{eq: J(gamma)} at $z=\pi/2$, defining
\begin{subequations}
    \begin{align}
        J(\gamma) &= J_<(\gamma) + J_>(\gamma), \\
        J_<(\gamma) &= \int_0^{\pi/2}\dd z \left( \frac{1}{\sqrt{1-(1-\gamma^{-2})\sinc^2 \! z}} -1\right), \label{eq: J<simp}\\
        J_>(\gamma) &= \int_{\pi/2}^\infty\dd z \left( \frac{1}{\sqrt{1-(1-\gamma^{-2})\sinc^2 \! z}} -1\right) . 
        \label{eq: J>simp}
    \end{align}
\end{subequations}

For $J_>(\gamma)$ \eqref{eq: J>simp}, we have
\begin{equation}
    0\le J_>(\gamma) \le \int_{\pi/2}^{\infty} \dd z \left( \frac{1}{\sqrt{1-(1-\gamma^{-2})/z^2}} -1\right)= \frac{\pi}{2}- 
    \sqrt{\frac{\pi^2}{4} - 1 + \gamma^{-2}} .  
\label{eq:J>bound}
\end{equation}
The rightmost expression in \eqref{eq:J>bound} is bounded as 
$\gamma\to\infty$. Hence $J_>(\gamma) = O(1)$ as $\gamma\to\infty$.

For $J_<(\gamma)$ \eqref{eq: J<simp}, we make a change of variables $u(z)=\sqrt{1-\sinc^2\! z}$, giving
\begin{equation}
    J_<(\gamma) = \int_0^{u(\pi/2)} \dd u \frac{1}{\sqrt{\gamma^{-2}+(1-\gamma^{-2})u^2}}\frac{1}{u'(z(u))}.
\end{equation}
One may show that $\sqrt{3} \le 1/u'(z(u))\le \sqrt{3} +cu^2$ on the integration range, for some constant~$c > 3\sqrt{3}/5$.
Thus, by elementary integration,
\begin{equation}
    \frac{\sqrt{3}}{B}\sinh^{-1}(\gamma AB) \le J_<(\gamma) \le  \frac{\sqrt{3}}{B}\sinh^{-1}(\gamma AB) + \frac{cA^2}{2B},
\end{equation}
where $A=u(\pi/2)$, $B=\sqrt{1-\gamma^{-2}}$, and we have estimated 
$u^2/\sqrt{\gamma^{-2}+ B^2 u^2}\le u/B$.
Noting that 
$B = 1 + O(\gamma^{-2})$ and 
$\sinh^{-1}(\gamma AB)= \log (2\gamma A)+O(\gamma^{-2})$
as $\gamma\to\infty$, this gives 
$J_<(\gamma) = \sqrt{3} \log (\gamma) + O(1)$ as $\gamma\to\infty$. 

Combining these estimates for $J_<$ and $J_>$ gives, as $\gamma \to \infty$, 
\begin{align}
\label{eq:Jgamma-asympt}
    J(\gamma) &= \sqrt{3} \log (\gamma) + O(1)
    \notag\\
    &= 
    -\frac{\sqrt{3}}{2}\log(1-v) + O(1), 
\end{align}
where in the last expression we have written $\gamma$ in terms of $v$ and the limit is $v\to1$.  

Finally, combining \eqref{eq:I-ito-J} and \eqref{eq:Jgamma-asympt} gives 
$I(v) = -\frac{2\sqrt{3}}{\pi^2}\log\bigl(1-v \bigr) + O(1)$ as $v\to1$.

We note in passing that the estimate \eqref{eq:Jgamma-asympt} can be improved to 
\begin{equation}
J(\gamma) = \sqrt{3}\log\bigl(\tfrac{2}{\sqrt{3}}\gamma \bigr) 
+ \int_0^\infty \dd z \left( \frac{1}{\sqrt{1-\sinc^2 \! z}}-\frac{\sqrt{3}}{z(1+z)}-1 \right) 
+ o(1), 
\label{eq:Jgamma-asympt-improved}
\end{equation}
by the technique of Appendix D.2 in~\cite{Biermann:2020bjh}. This allows one to find the constant term in $I(v)$ as $v\to1$.

\section{Proof and variants of Proposition \ref{prop:compact-prop}} \label{app: compactsupportaux}

This appendix provides the proof of Proposition~\ref{prop:compact-prop}, split into Sections \ref{app: compactsupport-chilambda}, \ref{app: compactsupport-part1} and~\ref{app: compactsupport-part2}. 
Section \ref{app: compactsupportaux-variants} discusses variants of Proposition \ref{prop:compact-prop} that ensue under stronger assumptions on the switching functions.

\subsection{$\chi_\lambda$} \label{app: compactsupport-chilambda}

Consider the adiabatic scaled switching family $\chi_\lambda = \chi(\tau/\lambda)$ \eqref{eq:phi-def-prep} in Proposition~\ref{prop:compact-prop}. 

$\chi$ is by assumption in~$C^1(\mathbb{R})$, absolutely integrable and square integrable, and not identically vanishing. 
These properties imply that $\chi_\lambda(\tau)$ is an ASSF, 
as was noted at the start of Section~\ref{subsubsec:adiab-revisited}. 

By continuity of~$\chi$, the bound \eqref{eq:thirdmoment-convergence}
implies that 
$\intinf \dd \tau \, |\tau^k \chi(\tau)| < \infty$ for $k=0,1,2,3$. 
A~dominated convergence argument then shows that $\widehat\chi$ is in~$C^3(\mathbb{R})$, 
$\widehat\chi^{(k)}(\omega) = (-\ii)^k \intinf \dd \tau \, \ee^{-\ii \omega\tau}\tau^k \chi(\tau)$ for $k=0,1,2,3$, and hence $\widehat\chi^{(k)}(0) = (-\ii)^k M_k[\chi]$ for $k=0,1,2,3$, where $M_k[\chi]$ are the moments of $\chi$ as defined in~\eqref{eq: moment def}. 

Since $M_0[\chi]=0$ by~\eqref{eq:cropped-zeroint}, it follows that 
$|\widehat\chi(\omega)| = O(\omega)$ as $\omega\to0$. 
The discussion around \eqref{eq:adiab-powerlaw-condition} then shows that when $\lambda(E)$ is the power-law~\eqref{eq: lambda = (S/E)^p} with $0 < \alpha < 2/3$, 
$\chi_\lambda$ satisfies the SFS property~\eqref{eq:SFS}. The small gap temperature is then positive and given by~\eqref{eq: non-zero temp}. 

This completes the proof of statements about $\chi_\lambda$ in Proposition~\ref{prop:compact-prop}.

\subsection{$\chi_{\lambda,\delta}$ is an ASSF} \label{app: compactsupport-part1}

Consider the switching family 
$\chi_{\lambda,\delta}$~\eqref{eq:phi-def}. We shall show that $\chi_{\lambda,\delta}$ is an ASSF (outcome 1 of
Proposition~\ref{prop:compact-prop}). 

Starting from \eqref{eq:phi-def}, we use the convolution theorem to write 
$\widehat{\chi}_{\lambda,\delta}$ in terms of $\widehat{\chi}$ and $\widehat{f}$ as 
\begin{align}
\label{eq:hatchi-conv}
\widehat{\chi}_{\lambda,\delta}(\omega)
= 
\frac{\lambda}{2\pi}
\intinf \dd \Omega \, 
\widehat{f}(\Omega)
\widehat{\chi}\bigl(\lambda\omega - \lambda^{1-\delta}\Omega \bigr) 
\,, 
\end{align}
where $\widehat{f}$ is smooth and falls off at infinity faster than any power, whereas $\widehat{\chi}$ is in $C^3(\mathbb{R})$ and square integrable. 

From \eqref{eq:hatchi-conv} we find 
\begin{align}
\frac{\widehat{\chi}_{\lambda,\delta}(u/\lambda)}{\lambda}
& = 
\frac{1}{2\pi}
\intinf \dd \Omega \, 
\widehat{f}(\Omega)
\widehat{\chi}\bigl(u - \lambda^{1-\delta}\Omega \bigr) 
\notag 
\\
& = 
\frac{1}{2\pi \lambda^{1-\delta}}
\intinf \dd \tilde \Omega \, 
\widehat{f}\bigl(\tilde\Omega / \lambda^{1-\delta}\bigr)
\widehat{\chi}\bigl(u - \tilde \Omega \bigr) , 
\label{eq:hatchi-conv2}
\end{align}
where the second equality comes by the change of variables $\Omega = \lambda^{\delta-1}\tilde\Omega$. 
From \eqref{eq:hatchi-conv2} we have the bound 
\begin{align}
\frac{|\widehat{\chi}_{\lambda,\delta}(u/\lambda)|}{\lambda}
\le \eta(u) , 
\end{align} 
where 
\begin{align}
\eta(u)
: = 
\sup_{\epsilon >0}
\frac{1}{2\pi}\left|
\intinf \dd \tilde \Omega \, 
 \epsilon^{-1} 
\widehat{f}\bigl(\tilde\Omega / \epsilon \bigr)  
  \widehat{\chi}\bigl(u - \tilde \Omega \bigr) \right| . 
\label{eq:hatchi-eta-def}
\end{align}
Because $\widehat f$ has rapid falloff and $\widehat\chi$ is square integrable, $\eta$ is square integrable
by Theorem III.2.(a) in~\cite{stein-sing-ints}, using the fact that $\widehat{\chi}$ has a square integrable maximal function by 
Theorem I.1.(c) in~\cite{stein-sing-ints}. This provides the $\eta$ for Proposition~\ref{prop:chilambda}. 

From \eqref{eq:xi} and \eqref{eq:hatchi-conv2} we have 
\begin{align}
\widehat\xi(u) 
&= 
\lim_{\lambda\to\infty}
\frac{\widehat{\chi}_{\lambda,\delta}(u/\lambda)}{\lambda}
\notag
\\
& = 
\frac{1}{2\pi}
\lim_{\lambda\to\infty}
\intinf \dd \Omega \, 
\widehat{f}(\Omega) 
\widehat{\chi}\bigl(u - \lambda^{1-\delta}\Omega \bigr)
\notag
\\
& = 
\widehat\chi(u) , 
\end{align}
taking the limit under the integral by dominated convergence, and using $f(0)=1$. Hence Proposition~\ref{prop:chilambda} applies with $\xi= \chi$. As $\chi$ is not identically vanishing, this completes the proof that $\chi_{\lambda,\delta}$ is an ASSF 
(outcome 1 of
Proposition~\ref{prop:compact-prop}).

\subsection{$\chi_{\lambda,\delta}$ satisfies SFS when $0<\alpha<2/3$ and $\delta> 3/2$} \label{app: compactsupport-part2}

What remains is to show that the switching family $\chi_{\lambda,\delta}$ \eqref{eq:phi-def} satisfies the SFS condition when $0<\alpha<2/3$ in the power-law $\lambda(E)$~\eqref{eq: lambda = (S/E)^p} 
and $\delta> 3/2$ 
(outcome 2 of
Proposition~\ref{prop:compact-prop}). 

Recall that the moments of~$\chi$, $M_k[\chi]$ \eqref{eq: moment def}, exists for $k=0,1,2,3$, by the assumption~\eqref{eq:thirdmoment-convergence}. Recall also that $M_0[\chi]=0$ by \eqref{eq: int chi = 0}. 

To begin, we assume $0<\alpha<1$ and $\delta>1$. 
We use \eqref{eq:phi-def} and the reality of $\chi$ and $f$ to write 
\begin{align} 
|\widehat{\chi}_{\lambda,\delta}(\omega)|^2 
&= \int_{{\mathbb{R}}^2} \dd u \, \dd v \,  \ee^{\ii\omega(u-v)} \chi(u/\lambda)\chi(v/\lambda)f(u/\lambda^\delta)f(v/\lambda^\delta)
\notag\\
&= \int_{{\mathbb{R}}^2} \dd u \, \dd v \,  \cos\bigl(\omega(u-v)\bigr)\chi(u/\lambda)\chi(v/\lambda)f(u/\lambda^\delta)f(v/\lambda^\delta)
\notag\\
&= \lambda^2
\int_{{\mathbb{R}}^2} \dd x \, \dd y \,  \cos\bigl(\omega \lambda (x-y)\bigr)\chi(x)\chi(y)
f(\lambda^{1-\delta}x) f(\lambda^{1-\delta}y)
\,, 
\label{eq: |varphi|^2}
\end{align}
where the second equality follows because the contribution from $\sin\bigl(\omega(u-v)\bigr)$ vanishes by antisymmetry under $(u,v)\to(v,u)$, 
and the third equality follows by the substitution $u=\lambda x$ and $v=\lambda y$.
The expression on the left-hand side of \eqref{eq:SFS} takes hence the form  
\begin{align} 
    \frac{1}{\lambda(E)} \int_{-|E|}^{|E|}\dd\omega \, {|\widehat{\chi}_{\lambda(E),\delta}(\omega)|}^2 &=  2S \lambda^{1-1/\alpha} 
\int_{{\mathbb{R}}^2} \dd x \, \dd y \,
    \sinc\bigl(S\lambda^{1-1/\alpha}(x-y) \bigr) \chi(x) \chi(y) f(\lambda^{1-\delta}x)f(\lambda^{1-\delta}y),
\label{eq: |varphi|^2 2}
\end{align}
interchanging the integrals by Fubini's theorem, performing the integral over $\omega$, and using \eqref{eq: lambda = (S/E)^p} to write $|E|$ in terms of~$\lambda$. 

To simplify the notation, we write
\begin{align} 
\epsilon = \frac{1}{\lambda} , 
\hspace{2ex}
a = \frac{1}{\alpha}-1,
\hspace{2ex}
b = \delta -1,
\label{eq: app D redefinition}
\end{align}
so that $\epsilon>0$, $a>0$, $b>0$, 
and the limit $\lambda\to\infty$ corresponds to $\epsilon \to 0$. 
From \eqref{eq: |varphi|^2 2} we then have 
\begin{subequations}
\label{eq:C-lhs-for_I}
\begin{align}
& \frac{1}{\lambda(E)} \int_{-|E|}^{|E|}\dd\omega \, {|\widehat{\chi}_{\lambda(E),\delta}(\omega)|}^2
= 2S \epsilon^{a} I_\epsilon
\,, \label{eq: varphi^2 and Iepsilon} \\
& I_\epsilon = \int_{{\mathbb{R}}^2} \dd x \, \dd y \,  \chi(x) \chi(y) \sinc\bigl(S \epsilon^a(x-y) \bigr)f(\epsilon^b x)f(\epsilon^b y)
\,. 
\label{eq: I epsilon}
\end{align}
\end{subequations}

Consider $I_\epsilon$~\eqref{eq: I epsilon}. Writing the integrand in terms of the functions $F(u) := f(u)-1$ and $G(u) := \sinc(u)-1$, we have 
\begin{align} 
I_\epsilon &= \int_{{\mathbb{R}}^2} \dd x \, \dd y \, \chi(x) \chi(y) \biggl[ G\bigl(S \epsilon^a (x-y) \bigr) + F(\epsilon^bx)F(\epsilon^b y) + G\bigl(S \epsilon^a (x-y) \bigr)\Bigl(F(\epsilon^bx)+F(\epsilon^b y)\Bigr)  
\notag \\
&\hspace{30ex}+G\bigl(S \epsilon^a (x-y) \bigr) F(\epsilon^bx)F(\epsilon^b y)\biggr], 
\label{eq: I epsilon 2}
\end{align}
where we have used
\begin{subequations}
\begin{align}
0 = \int_{{\mathbb{R}}^2} \dd x \, \dd y \, \chi(x) \chi(y) 
= \int_{{\mathbb{R}}^2} \dd x \, \dd y \, \chi(x) \chi(y) F(\epsilon^bx) = \int_{{\mathbb{R}}^2} \dd x \, \dd y \, \chi(x) \chi(y) F(\epsilon^by) \,,
\end{align}
\end{subequations}
which follows because $M_0[\chi]=0$. 

$F$ and $G$ are smooth bounded functions with the Maclaurin expansions 
$F(u) = f'(0) u + O(u^2)$ and $G(u) = -\tfrac16 u^2 + O(u^4)$. 
In the first and second term in \eqref{eq: I epsilon 2}, a Taylor expansion under the integral gives 
\begin{subequations}
\begin{align} 
\int_{{\mathbb{R}}^2} \dd x \, \dd y \, \chi(x) \chi(y) G\bigl(S \epsilon^a (x-y) \bigr) 
& = 
-\frac16 S^2 \epsilon^{2a}
\int_{{\mathbb{R}}^2} \dd x \, \dd y \, \chi(x) \chi(y) {(x-y)}^2
\ + o(\epsilon^{2a})
\notag
\\
& = 
\frac13 S^2 \epsilon^{2a}
{M_1[\chi]}^2
\ + o(\epsilon^{2a}) 
\,,
\label{eq:chi-chi-sinc}
\\
\intinf \dd x \, \chi(x) F(\epsilon^bx)
& = 
f'(0) \epsilon^b \intinf \dd x \, \chi(x)
 x 
\ + o(\epsilon^{b}) 
\notag
\\
& = 
f'(0) M_1[\chi]\epsilon^b
\ + o(\epsilon^{b}) 
\,,
\end{align}
\end{subequations}
where the error terms follow by a dominated convergence argument, using integrability of $x^2 |\chi(x)|$ and $|x \chi(x)|$, which follows from~\eqref{eq:thirdmoment-convergence}, 
and in \eqref{eq:chi-chi-sinc} we have used $M_0[\chi]=0$. 
The third and fourth term in \eqref{eq: I epsilon 2} are subdominant to the first term, by~\eqref{eq:thirdmoment-convergence}. 
Collecting, we have 
\begin{align} 
\label{eq:Ieps-final}
I_\epsilon
= 
{M_1[\chi]}^2 \! 
\left( \frac13 S^2 
 \epsilon^{2a}
+ 
\bigl(f'(0)\bigr)^2 \epsilon^{2b}
\right)
+ o(\epsilon^{2a}) + o(\epsilon^{2b}) 
\,. 
\end{align}
From \eqref{eq:C-lhs-for_I} and \eqref{eq:Ieps-final} we then have
\begin{align}
\label{eq:CondC-appvars}
\frac{1}{\lambda(E)} \int_{-|E|}^{|E|}\dd\omega \, {|\widehat{\chi}_{\lambda(E),\delta}(\omega)|}^2
= 
2 S {M_1[\chi]}^2 \! 
\left( \frac13 S^2 
 \epsilon^{3a}
+ 
\bigl(f'(0)\bigr)^2 \epsilon^{a + 2b}
\right)
+ o(\epsilon^{3a}) + o(\epsilon^{a + 2b}) 
\,. 
\end{align}

Now, the SFS condition \eqref{eq:SFS}
is equivalent to the condition that
\eqref{eq:CondC-appvars} is~$o(\epsilon^{1+a})$, 
using \eqref{eq: lambda = (S/E)^p} and~\eqref{eq: app D redefinition}. 
This happens when $a>1/2$ and $b>1/2$. 
In terms of $\alpha$ and~$\delta$, this happens when $0<\alpha<2/3$ and $\delta>3/2$. 

This completes the proof of outcome 2 of
Proposition~\ref{prop:compact-prop}.

\subsection{Variants of Proposition \ref{prop:compact-prop}} \label{app: compactsupportaux-variants}

If the falloff conditions of $\chi$ are strengthened, the small $\epsilon$ expansion of $I_\epsilon$ \eqref{eq: I epsilon} can be carried out to higher orders than shown in~\eqref{eq:Ieps-final}. If $\chi$ and $f$ are such that some of the coefficients of the low-order terms in this expansion vanish, the outcome is a variant of Proposition \ref{prop:compact-prop} where the domains of $\alpha$ or $\delta$ can be larger. We shall present here one such variant. 

We assume that both $\chi$ and $f$ are even, from which it follows that the coefficients of $\epsilon^{2a}$ and $\epsilon^{2b}$ in \eqref{eq:Ieps-final} vanish. We strengthen the falloff of $\chi$ from \eqref{eq:thirdmoment-convergence} to 
\begin{align}
\label{eq:fifthmoment-convergence}
\intinf \dd \tau \, |\tau^6 \chi(\tau)| < \infty
\,. 
\end{align}
The moments $M_k[\chi]$ \eqref{eq: moment def} are then defined for $k=0,1,\ldots,6$, and $M_0[\chi] = M_1[\chi] = M_3[\chi] = M_5[\chi] =0$. 
We now show that the error terms in \eqref{eq:Ieps-final} can be improved.

In \eqref{eq: I epsilon 2}, we use for $F$ and $G$ the Maclaurin expansions 
$F(u) = \tfrac12 f''(0) u^2 + O(u^4)$ and $G(u) = -\tfrac16 u^2 + \frac{1}{120}u^4 + O(u^6)$. For the first and second term in \eqref{eq: I epsilon 2}, we find 
\begin{subequations}
\begin{align} 
\int_{{\mathbb{R}}^2} \dd x \, \dd y \, \chi(x) \chi(y) G\bigl(S \epsilon^a (x-y) \bigr) 
& = 
\frac{1}{120} S^4 \epsilon^{4a}
\int_{{\mathbb{R}}^2} \dd x \, \dd y \, 
\chi(x) \chi(y) {(x-y)}^4
\ + o(\epsilon^{4a})
\notag
\\
& = 
\frac{1}{20} S^4 \epsilon^{4a}
{M_2[\chi]}^2
\ + o(\epsilon^{4a}) 
\,, 
\\
\intinf \dd x \, \chi(x) F(\epsilon^bx)
& = 
\frac12 f''(0) \epsilon^{2b} \intinf \dd x \, \chi(x)
 x^2 
\ + o(\epsilon^{2b}) 
\notag
\\
& = 
\frac12 f''(0) \epsilon^{2b} M_2[\chi]
\ + o(\epsilon^{2b}) 
\,. 
\end{align}
\end{subequations}
The third and fourth terms in \eqref{eq: I epsilon 2} are again subdominant to the first term. Collecting, we have 
\begin{align} 
\label{eq:Ieps-final-strong}
I_\epsilon
= 
{M_2[\chi]}^2 \! 
\left( \frac{1}{20} S^4
 \epsilon^{4a}
+ 
\frac14 
\bigl(f''(0)\bigr)^2 \epsilon^{4b}
\right)
+ o(\epsilon^{4a}) + o(\epsilon^{4b}) 
\,, 
\end{align}
from which \eqref{eq:C-lhs-for_I} gives 
\begin{align}
\label{eq:CondC-appvars-strong}
\frac{1}{\lambda(E)} \int_{-|E|}^{|E|}\dd\omega \, {|\widehat{\chi}_{\lambda(E),\delta}(\omega)|}^2
= 
2 S {M_2[\chi]}^2 \! 
\left( \frac{1}{20} S^4
 \epsilon^{5a}
+ 
\frac14 
\bigl(f''(0)\bigr)^2 \epsilon^{a+4b}
\right)
+ o(\epsilon^{5a}) + o(\epsilon^{a+4b}) 
\,. 
\end{align}
The SFS condition \eqref{eq:SFS} 
is equivalent to the statement that 
\eqref{eq:CondC-appvars-strong} is $o(\epsilon^{1+a})$, which happens when $a>1/4$ and $b>1/4$. 
In terms of $\alpha$ and~$\delta$, this happens when $0<\alpha<4/5$ and $\delta>5/4$. 

Collecting, we have shown that in this variant of proposition \ref{prop:compact-prop} the parameter ranges in outcome 2 are broadened to  $0<\alpha<4/5$ and $\delta>5/4$.

\bibliography{references}

\end{document}